\definecolor{light-gray}{gray}{0.8}
\newtheorem{cor}{Corollary}
\newtheorem{example}{Example}
\newtheorem{conj}[theorem]{Conjecture}
\newcommand{\argmax}{\operatornamewithlimits{argmax}}
\newcommand\xqed[1]{%
  \leavevmode\unskip\penalty9999 \hbox{}\nobreak\hfill
  \quad\hbox{#1}}
\newcommand\demo{\xqed{$\triangle$}}
\begin{document}

\markboth{F. Gonze \and R.M. Jungers}
{SPF and TRT approaches for synchronizing automata}

\title{
On the Synchronizing Probability Function and the Triple Rendezvous Time for Synchronizing Automata
}

\author{Fran\c cois Gonze\thanks{
This work was also supported by the communaut\'e francaise de Belgique - Actions de Recherche Concert\'ees and by the Belgian Program on Interuniversity Attraction Poles initiated by the Belgian Federal Science Policy Office.} 
\and Rapha\"el M. Jungers \thanks{R. M. Jungers is a F.R.S.-FNRS Research Associate}}

\maketitle


\begin{abstract}
\v Cern{\'y}'s conjecture is a longstanding open problem in automata theory. We study two different concepts, which allow to approach it from a new angle. The first one is the \emph{triple rendezvous time}, i.e., the length of the shortest word mapping three states onto a single one. The second one is the \emph{synchronizing probability function} of an automaton, a recently introduced tool which reinterprets the synchronizing phenomenon as a two-player game, and allows to obtain optimal strategies through a Linear Program.

Our contribution is twofold. First, by coupling two different novel approaches based on the synchronizing probability function and properties of linear programming, we obtain a new upper bound on the triple rendezvous time. 
Second, by exhibiting a family of counterexamples, we disprove a conjecture on the growth of the synchronizing probability function. We then suggest natural follow-ups towards \v Cern{\'y}'s conjecture.
\end{abstract}

\begin{keywords}Automata, Synchronization, \v Cern{\'y}'s conjecture, Game theory.\end{keywords}

\begin{AMS}68Q45, 68R05, 68R10, 90B15, 05D40 \end{AMS}

\section{Synchronizing Automata and \v Cern{\'y}'s Conjecture}

\makeatletter{\renewcommand*{\@makefnmark}{}
\footnotetext{Preliminary results have been presented at the conference LATA 2015 \cite{GonzeLATA}.}\makeatother}

Automata are a natural way to model systems that can take multiple different states, so that actions made on these systems have an effect depending on the current state. For such systems, it can be desirable to have a particular input sequence which would ensure a known final state, independently of the initial one. Automata with this property are called \emph{synchronizing}. Synchronizing automata first appeared in computers and relay control systems in the 60s. In the 80s and 90s, this subject found applications in robotics and in the industry. More recently, it has been used to model consensus theory and linked with primitivity of matrix sets (see \cite{PYChev, JungersBlondelOlshevsky14}).

Formally, a \emph{deterministic, finite state, complete automaton} (DFA) is a triplet $(Q,\Sigma, \delta)$ with $Q$ the set of \emph{states}, $\Sigma$ the alphabet of \emph{letters} and $\delta$ the transition function $\delta: Q\times \Sigma \rightarrow Q$ defining the effect of the letters on the states. Figure \ref{ExampleAutomaton} shows an example of such an automaton. In this paper, we will represent automata as sets of matrices as follows. 
\begin{figure}[h!]
\begin{center}
\scalebox{1}{
\begin{tikzpicture}[->,>=stealth',shorten >=1pt,auto,node distance=1.7cm,
                    semithick]
  \tikzstyle{every state}=[fill=light-gray,draw=none,text=black, scale=0.8]

  \node[state] (A)                    {$1$};
  \node[state]         (B) [above right of=A] {$2$};
  \node[state]         (D) [below right of=A] {$4$};
  \node[state]         (C) [below right of=B] {$3$};

  \path (A) edge              node {b} (B)
            edge [loop left]  node {a} (C)
        (B) edge [loop above] node {a} (B)
            edge              node {b} (C)
        (C) edge              node {b} (D)
         	edge [loop right] node {a} (D)
        (D) edge  			  node {a, b} (A);
\end{tikzpicture}
}
\end{center}
\caption{A synchronizing automaton. The word $abbbabbba$ maps every state onto state $1$.}
\label{ExampleAutomaton}
\end{figure}
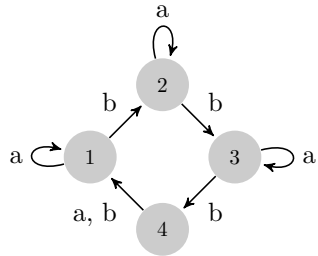
\begin{definition}
A \textit{(deterministic, finite state, complete) \emph{automaton}} (DFA) is a set of $m$ row-stochastic matrices $\Sigma \subset \{0,1\}^{n\times n}$ (where $m,n$ are respectively the number of \emph{letters} in the alphabet, and the number of \emph{states} of the automaton). Each letter corresponds to a matrix $L\in \Sigma$ with binary entries, which satisfies  $ L e ^T=e^T,$ where $e$ is the $1\times n$ all-ones vector. We write $\Sigma^t$ for the set of products of length $t$ of matrices taken in $\Sigma.$  We refer to these matrices as \emph{words} of length $t$. States and sets of states are represented by their $1\times n $ characteristic vector in the canonical way.
\end{definition}

\begin{definition}
An automaton $\Sigma \subset \{0,1\}^{n\times n}$ is \emph{synchronizing} if there is an index $1\leq i\leq n$ and a finite product $W=L_{c_1}\fg{\cdots} L_{c_s}:\, L_{c_j}\in \Sigma$ which satisfy $$W=e^Te_i, $$
where $e_i$ is the $i$th standard $1\times n$ basis vector.\\
In this case, the sequence of letters $L_{c_1}\fg{\cdots} L_{c_s}$ is said to be a \emph{synchronizing word}.
\end{definition}
\begin{example}
\label{ExampleLetters}
The two letters of the automaton in Fig.\ref{ExampleAutomaton} are the following matrices:
\rev{\arraycolsep=1.8pt\def\arraystretch{0.8}
$$a=\left( \begin{array}{cccc}
1 & 0 & 0 & 0\\[-0.05cm]
0 & 1 & 0 & 0 \\[-0.05cm]
0 & 0 & 1 & 0 \\[-0.05cm]
1 & 0 & 0 & 0 \end{array} \right)
b=\left( \begin{array}{cccc}
0 & 1 & 0 & 0\\[-0.05cm]
0 & 0 & 1 & 0 \\[-0.05cm]
0 & 0 & 0 & 1 \\[-0.05cm]
1 & 0 & 0 & 0 \end{array} \right)$$}
and we have the synchronizing word \arraycolsep=1.8pt\def\arraystretch{0.8} $abbbabbba=\left(\begin{array}{cccc}
1 & 1 & 1 & 1 \end{array}\right)^T\left(\begin{array}{cccc}
1 & 0 & 0 & 0 \end{array}\right).$\demo
\end{example}

Verifying that an automaton is synchronizing can be done in quadratic time\footnote{When we refer to the length of a word or to computational complexity, we compare it to the number of states $n$ of the automaton.}. However, finding a short synchronizing word is hard (see \cite{OlUm}). Jan \v Cern{\'y} conjectured in 1964 \cite{cernyPirickaRosenauerova64} that if an automaton is synchronizing, the length of its shortest synchronizing word, \rev{also called \emph{reset threshold} of the automaton}, is quadratic: 
\begin{conj}[\v Cern{\'y}'s conjecture, 1964 \cite{cernyPirickaRosenauerova64}] \label{cernyconj}
Let $\Sigma \subset \{0,1\}^{n\times n}$ be a synchronizing automaton.  Then, there is a synchronizing word of length at most $(n-1)^2.$\end{conj}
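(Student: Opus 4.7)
Since Černý's conjecture has resisted proof for sixty years, any plan must be tentative; what follows is the natural line of attack suggested by the two tools foregrounded in the abstract. The classical telescoping approach defines, for each $2\leq k\leq n$, the \emph{$k$-rendezvous time} $c_k(n)$ as the maximum, over synchronizing automata on $n$ states, of the length of a shortest word that strictly compresses some $k$-subset of states. Iteratively compressing an $n$-set, then an $(n-1)$-set, and so on down to a singleton, produces a synchronizing word of length at most $\sum_{k=2}^{n} c_k(n)$. The Černý family saturates this bound with $c_k(n)=2(n-k)+1$ for $k\geq 3$ and $c_2(n)=1$, summing to exactly $(n-1)^2$. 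The plan is therefore to establish $c_k(n)\leq 2(n-k)+1$ in full generality.

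The first step is the triple rendezvous time $c_3(n)$, which is precisely the focus of the paper's main technical contribution. I would start from the synchronizing probability function $k_t$: recasting synchronization as a two-player zero-sum game and invoking LP duality yields equilibrium strategies in which the adversary's distribution over states and the player's choice of letters are tightly coupled. When the adversary's support is concentrated on three states, the LP constraints become rigid enough that the combination of SPF monotonicity with duality used in the paper should be pushable toward the target $c_3(n)\leq 2n-5$. For larger $k$, I would attempt an inductive step: apply a short word that compresses some $k$-subset, pass to the image subautomaton on one fewer state, and try to reduce bounding $c_k(n)$ to bounding $c_{k'}(n-1)$ for $k'\leq k$.

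The main obstacle, and the reason this conjecture remains open, is that no uniform progress measure tracks the linear rate $2(n-k)+1$ across all $k$. Frankl's double-counting yields only $c_k(n)=O(k^2)$, summing to the familiar $O(n^3/6)$ bound, and no per-step sharpening by a constant factor breaks the quadratic-in-$n$ barrier. The paper's second contribution, a family of counterexamples to a natural growth conjecture on the SPF, underscores this difficulty: the SPF does not grow regularly enough for a simple monotone invariant to close the gap. I therefore expect the genuinely hard step to be the inductive compression for $k\geq 4$, where the LP witness can place probability mass outside the chosen $k$-subset, decoupling the local rendezvous analysis from the global reset threshold. As a fallback, one could pursue a trade-off showing that any automaton with anomalously large $c_k(n)$ must have compensatingly small $c_{k'}(n)$ for other values of $k'$; but formalising such a trade-off requires structural information about extremal synchronizing automata that neither the SPF nor the triple rendezvous time currently supplies, which is precisely why the present paper offers incremental progress rather than a complete resolution.
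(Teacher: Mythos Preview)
The statement you are addressing is \v Cern\'y's \emph{conjecture}; the paper does not prove it and does not claim to. It is stated as Conjecture~\ref{cernyconj} precisely because it remains open, and the paper's contributions are an upper bound on the triple rendezvous time $T_3$ together with a counterexample to Conjecture~\ref{ConjK}. There is therefore no ``paper's own proof'' to compare your proposal against, and what you have written is, as you yourself say, a tentative research outline rather than a proof.

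Beyond that framing issue, your outline has two concrete gaps. First, your definition of $c_k(n)$ as the shortest word compressing \emph{some} $k$-subset does not support the telescoping you describe: after one compression you hold a \emph{specific} $(n{-}1)$-subset, and you need a word compressing that particular set, not merely some $(n{-}1)$-subset. The quantity that telescopes is $\max_{|S|=k}\min_{w:|Sw|<k}|w|$, not $\min_{|S|=k}\min_{w}|w|$. Second, even with the correct universal definition, your arithmetic does not close: with $c_2=1$ and $c_k=2(n-k)+1$ for $k\ge 3$ one gets $1+\sum_{m=0}^{n-3}(2m+1)=1+(n-2)^2$, not $(n-1)^2$. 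Finally, note that the paper's $T_3$ is not your $c_3$: $T_3$ asks for three states mapped to a \emph{single} state (a column of weight~$\ge 3$ in $A(t)$), which is strictly stronger than merely compressing a $3$-set to size~$\le 2$. The SPF machinery in Section~3 bounds $T_3$, and the paper's own counterexamples in Section~4 show that the SPF does not grow regularly enough to drive the inductive scheme you sketch; this is exactly the obstruction you identify in your last paragraph, and it is why the paper presents partial progress rather than a proof.
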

Although this conjecture is simple to state, it is still unsolved. If the conjecture is true, then $(n-1)^2$ is also a tight bound. Indeed, in  \cite{cerny64}, \v Cern\'y proposes an infinite family of automata attaining it, for any number of states. We refer to this family as the \emph{\v Cern\'y family} of automata. Figure \ref{ExampleAutomaton} shows the automaton of the family with four states.
Synchronizing automata attaining this bound or having a shortest synchronizing word close to it are very infrequent \rev{(see \cite{AnanGusVolk, JKari01, Roman, Ananichev200730} for examples)}. On a brighter side, another longstanding open problem based on synchronizing automata, namely, the road-coloring problem, was recently solved by Trahtman (see \cite{trahtman}). Many problems mixing road-coloring and \v Cern\'y's conjecture are still open (see \cite{Vorel, ThesisSp}). 


In the last decades, Conjecture \ref{cernyconj} has been the subject of intense research. 
It has been proven to hold for several families of automata (see  \cite{kari03,eppstein90reset, Dubuc98, trahtman_cerny,1065155,BBP11,carpiarticle,cerny64}), including cyclic and Eulerian.
However, the best general upper bound \rev{on the reset threshold} of an automaton with $n$ states is equal to $ (n^3-n)/6$, obtained by Pin and Frankl \cite{Frankl82, Pin83a}, and rediscovered independently in \cite{KlyachkoRystsovSpivak87}. 
This bound has been holding for more than 30 years\footnote{A bound of $n(7n^2+6n-16)/48$ was proposed in \cite{Trahtman2011}, but its proof was incorrect, as shown in \cite{GJT2014}.}. A state of the art overview is given by Volkov \cite{volkov_survey}.


Recently, several research efforts have tried to shed light on the problem by making use of probabilistic approaches (see \cite{jungers_sync_12, steinberg-2009}). The main tool we will focus on, the \emph{Synchronizing Probability Function} (SPF), was introduced by the second author in 2012 \cite{jungers_sync_12}. This tool allows the reformulation of the synchronizing property as a game theoretical problem whose solution can be obtained through convex optimization. Convex optimization is a mature discipline with strong theoretical basis (see for instance \cite{BoydBook}). Our hope is that in this framework, properties of synchronization can be better understood and proved using tools that have not been used on DFA yet.

The philosophy behind Conjecture \ref{cernyconj} is to bound the length of the shortest word mapping all the states onto a single one. Based on this idea, one could wonder what is the length of the shortest word for which there exists a set of states \emph{of a given cardinality} mapped onto a single state by this word. For cardinality two the problem is solved, as in any synchronizing automaton there always exists a single letter mapping two states onto a single state. Therefore in that case the answer is \rev{``one''}. For higher values, to the best of our knowledge, the question is open. We will analyse the case of cardinality three, which we coin the \emph{triple rendezvous time}.


 In Section 2 we recall the main properties of the synchronizing probability function. In Section 3, we introduce the concept of triple rendezvous time and, by making use of the synchronizing probability function, we obtain a new upper bound on this value. In Section 4, we refute a recent conjecture on the synchronizing probability function (Conjecture 2 in \cite{jungers_sync_12}) by presenting a particular family of automata which does not satisfy it. This paper is the journal version of our conference presentation appeared in \cite{GonzeLATA}, with examples, full proofs and an improved upper bound for the triple rendezvous time (Theorem \ref{HighT3bound}).

\section{A Game Theoretical Framework and the Synchronizing Probability Function}

In this section, we recall definitions and properties of the synchronizing probability function needed to develop our results. A more complete introduction to the SPF and the details of the proofs can be found in \cite{jungers_sync_12}. This concept is based on the following two-player game, which gives another perspective on the synchronization of an automaton. For a given automaton and a length $t$ chosen in advance, the rules are as follows:
\begin{enumerate}
\item Player Two secretly chooses a state $e_j$ of the automaton.
\item Player One chooses a word $W$ of length at most $t$. 
\item Player One guesses the final state $e_jW$. If it is the right state, he wins. Otherwise, Player Two wins.
\end{enumerate}

In this game, if the length $t$ is larger or equal to the length of a synchronizing word, a winning strategy for Player One is to take this synchronizing word and the state on which the automaton is mapped. Oppositely, if $t$ is zero, Player One can only choose randomly one of the states, and he \rev{has} probability $1/n$ of winning. We consider that both players can choose probabilistic strategies.

 The \emph{policy of Player Two} is defined as a probability distribution over the states, that is, any vector $p\in \mathbb{R}^{n }_+, ep^T=1.$  Player Two chooses the state \rev{$e_j$} with probability \rev{$p_j,$} in which case the automaton will end up at the state corresponding to \rev{$e_jW$}. Since Player One wants to maximize the probability of choosing the right final state, he will pick up the state where the probability for the automaton to end is maximal, that is,
  $$\rev{\argmax_i (pW)e_i^T.} $$ 
Therefore the probability of winning for Player One is
  \begin{equation}
  \label{eqn-proba}
  \rev{\max_{i,W}(pW)e_i^T.}
 \end{equation}
The aim of Player Two is to minimize that probability.
  
In the following, $\Sigma^{\leq t}$ is the set of products of length at most $t$ of matrices taken in $\Sigma.$  By convention, and for the ease of notation, it contains the product of length zero, which is the identity matrix.

\begin{definition}[SPF, Definition 2 in \cite{jungers_sync_12}]
Let $n\in \mathbb N$ and $\Sigma \subset \{0,1\}^{n\times n}$ be an automaton.
 The \emph{ synchronizing probability function} (SPF) of $\Sigma$ is the function $k_\Sigma:\, \mathbb N \rightarrow \mathbb R_+:$
\rev{\begin{eqnarray}\label{eqn-catchingproba}
k_\Sigma(t)&=&\min_{p\in \mathbb R^{ n}_+,\ ep^T=1}{\left \{\max_{W\in \Sigma^{\leq t}, i}(pW)e_i^T\right \}}.\end{eqnarray}}\end{definition}
The SPF gives the probability of winning the game that Player One can achieve with parameter $t$ if Player Two plays optimally. If there is no ambiguity on the automaton, we use $k(t)$ for $k_\Sigma(t)$. 

The synchronizing probability function is non-decreasing with respect to $t$. Moreover, its value is one if and only if there is a synchronizing word of length smaller or equal to $t$. This leads to the following reformulation of Conjecture \ref{cernyconj}:

\begin{proposition}[Proposition 1 in \cite{jungers_sync_12}]
The following conjecture is equivalent to Conjecture \ref{cernyconj}:

If  $\Sigma \subset \{0,1\}^{n\times n}$ is a synchronizing automaton, then,
$$\forall t\geq (n-1)^2,\quad k_\Sigma(t)=1 .$$\end{proposition}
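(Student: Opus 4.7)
The plan is to derive the equivalence directly from the two structural properties of $k_\Sigma$ that have just been recalled in the excerpt: monotonicity of $k_\Sigma$ in $t$, and the characterization $k_\Sigma(t)=1 \iff$ there exists a synchronizing word of length at most $t$. Since both of these are stated as already established, the equivalence itself reduces to a short bookkeeping argument, and the main task is really to check that each direction uses only statements quantified uniformly over all synchronizing automata on $n$ states.

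For the forward direction, I would assume \v{C}ern\'y's conjecture and fix an arbitrary synchronizing automaton $\Sigma\subset\{0,1\}^{n\times n}$. The conjecture guarantees a synchronizing word $W$ of length at most $(n-1)^2$, so by the characterization property, $k_\Sigma((n-1)^2)=1$. The function $k_\Sigma$ is non-decreasing in $t$ and bounded above by $1$ (being a probability), so it remains equal to $1$ for every $t\geq (n-1)^2$. This establishes the reformulation.

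For the reverse direction, I would suppose that $k_\Sigma(t)=1$ for every synchronizing $\Sigma$ on $n$ states and every $t\geq (n-1)^2$, specialize to $t=(n-1)^2$, and invoke the characterization property in the other direction to extract a synchronizing word of length at most $(n-1)^2$, which is exactly \v{C}ern\'y's bound. Since $\Sigma$ was arbitrary, the conjecture holds.

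There is no real obstacle here beyond trusting the two recalled properties of $k_\Sigma$. If one wanted to check them on the spot, the nontrivial half is that $k_\Sigma(t)=1$ forces a synchronizing word; the natural approach is to test the uniform policy $p=e/n$, observe that the condition $\max_{W,i}(pW)e_i^T=1$ forces some column of some $W\in\Sigma^{\leq t}$ to be the all-ones column, and conclude $W=e^Te_i$, i.e., $W$ synchronizes onto state $i$. The other half is immediate: if $W=e^Te_i$ of length at most $t$ exists, then for every $p$ with $ep^T=1$ we get $pW=e_i$, hence the inner maximum is already $1$ regardless of $p$, so $k_\Sigma(t)=1$.
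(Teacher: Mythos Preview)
Your argument is correct. Note, however, that the present paper does not actually supply its own proof of this proposition: it is quoted verbatim as ``Proposition~1 in \cite{jungers_sync_12}'' and left unproved, with the two facts you invoke (monotonicity of $k_\Sigma$, and $k_\Sigma(t)=1$ iff a synchronizing word of length at most $t$ exists) stated immediately beforehand precisely so the reader can reconstruct the argument you gave. So your write-up matches the intended reasoning exactly; there is nothing to compare against beyond that.
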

Figure \ref{ExampleSPF} represents the SPF of the automaton presented in Fig. \ref{ExampleAutomaton}.
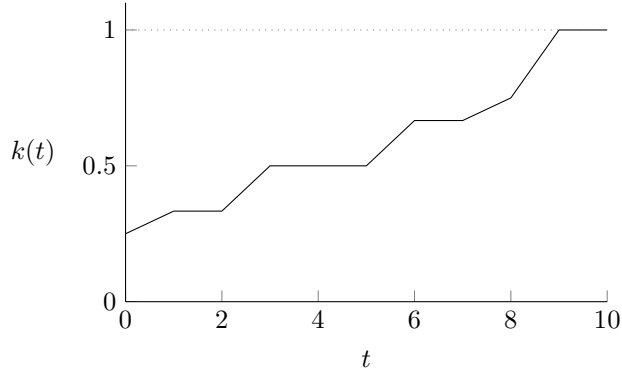
\begin{figure}[h!]
\begin{center}
%
%
\begin{tikzpicture}

\begin{axis}[%
width=2.52083333333333in,
height=1.565625in,
scale only axis,
xmin=0,
xmax=10,
xlabel={$t$},
ymin=0,
ymax=1.1,
ylabel={$k(t)$},
ylabel style={rotate=-90},
axis x line*=bottom,
axis y line*=left
]
\addplot [color=black,solid,forget plot]
  table[row sep=crcr]{%
1	0.333333333313348\\
2	0.333333332953387\\
3	0.499999999999815\\
4	0.499999999999773\\
5	0.500000001117456\\
6	0.666666666662096\\
7	0.666666666488453\\
8	0.750000005937906\\
9	1.00000000000043\\
10	1.00000000000964\\
};
\addplot [color=gray,dotted,forget plot]
  table[row sep=crcr]{%
0	1\\
10	1\\
};
\addplot [color=black,solid,forget plot]
  table[row sep=crcr]{%
0	0.25\\
1	0.333333333333333\\
};
\end{axis}
\end{tikzpicture}

\end{center}
\caption{The synchronizing probability function of the automaton with four states presented in Fig. \ref{ExampleAutomaton}. There is a synchronizing word of length nine, therefore $k(9)=1$.}
\label{ExampleSPF}
\end{figure}
In order to use the SPF, we need an explicit algorithmic construction of the optimal strategies for both players, which allows us to compute its value. Each basic strategy of Player One, i.e., the choice of a word and a final state, is equivalent to choosing a column in this word. Therefore, we consider the set of all the different columns reached in words of length at most $t$.

\begin{definition}
\label{DefA}
We say that a vector is \emph{reachable at $t$} if it is equal to any column of the words in $\Sigma^{\leq t}$. We denote by $A(t)$ the set of all the reachable vectors at $t$. We represent $A(t)$ as a $n\times m(t)$ matrix, where $m(t)$ is the number of different reachable vectors at $t$.
\end{definition}

We notice that $A(t-1)\subseteq A(t)$. In order to have a unique matrix representation for $A(t)$, $t\geq 1$, we choose the first block of $A(t)$ equal to $A(t-1)$, we sort the $m(t)-m(t-1)$ last columns of $A(t)$ by lexicographical order, and we choose $A(0)$ equal to the identity matrix. When there is no ambiguity on $t$, we use $A$ for $A(t)$.

\begin{example}
\label{A3}
Let us go back to the automaton presented in Example \ref{ExampleLetters}. At $t=3$, $A(3)$ is given by: $$A(3)=\left( \begin{array}{ccccccc}
1 & 0 & 0 & 0 &  1 & 0 & 0\\[-0.05cm]
0 & 1 & 0 & 0 &  0 & 0 & 1\\[-0.05cm]
0 & 0 & 1 & 0 &  0 & 1 & 1\\[-0.05cm]
0 & 0 & 0 & 1 &  1 & 1 & 0\end{array} \right).$$
The first four columns corresponds to $A(0)$, the fifth column comes from the word $a$, the sixth from the word $ba$ and the seventh from the word $bba$.
\demo
\end{example}

The \emph{policy of Player One} is defined as a probability distribution over the columns of $A(t)$, that is, any column vector $q\in \mathbb{R}^{m(t)}_+$ such that $eq=1.$

It turns out that the SPF can be efficiently computed thanks to the following linear programs\footnote{The following inequalities are entrywise.}. 

 
 \begin{theorem}[Theorem 1 in \cite{jungers_sync_12}]
  The \emph{synchronizing probability function} $k_\Sigma(t)$ of a DFA $\Sigma$ is given by
\begin{eqnarray}
\label{eqn-catchingprobalin}
\min_{p,k}& &\ k\\ \nonumber s.t. \ 
&&{{pA\leq ke}}\\\nonumber && ep^T=1\\\nonumber &&p\geq 0 .
\end{eqnarray}
 It is also given by:
 \begin{eqnarray}
 \label{eqn-catchingprobalindual}
 \max_{q,k}&&\ k\\
\nonumber  s.t. \ &&{{Aq\geq ke^T}}  \\\nonumber &&eq=1\\\nonumber &&q \geq 0 .
\end{eqnarray} 
\rev{In the equations above, $A$ denotes the set of reachable vectors at $t$ (see Definition \ref{DefA}), $p$ is a $1 \times n$ vector}, $q$ is a $m(t)\times 1$ vector, $e$ represents all-ones vectors of the appropriate dimension, 1 is a scalar, and 0 represents zero vectors of the appropriate dimensions.\end{theorem}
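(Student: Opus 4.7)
The plan is to first rewrite the definition of the SPF in the form of the primal linear program (\ref{eqn-catchingprobalin}), and then to obtain the dual program (\ref{eqn-catchingprobalindual}) by invoking strong LP duality.

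For the first step, I would start from Equation (\ref{eqn-catchingproba}). Fix a policy $p$ of Player Two. A basic strategy of Player One consists of choosing a word $W\in\Sigma^{\leq t}$ and a final state $e_i$; the probability of winning is $(pW)e_i^T$, which is precisely the $i$-th entry of the row vector $pW$, and that entry equals $p$ times the $i$-th column of $W$. As $W$ ranges over $\Sigma^{\leq t}$ and $i$ over $\{1,\ldots,n\}$, the pair $(W,i)$ enumerates (with repetition) every column of every word of length at most $t$, i.e., exactly the columns of the matrix $A=A(t)$ from Definition \ref{DefA}. Thus $\max_{W,i}(pW)e_i^T$ equals the largest entry of the row vector $pA$, which can be written as the smallest scalar $k$ with $pA\leq ke$. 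Minimizing this quantity over probability vectors $p$ yields exactly the primal program (\ref{eqn-catchingprobalin}).

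For the second step, I would apply LP duality to (\ref{eqn-catchingprobalin}). Introducing a non-negative multiplier $q\in\mathbb R^{m(t)}$ for the $m(t)$ inequalities $pA\leq ke$ and a free multiplier $\mu$ for the normalization $ep^T=1$, the Lagrangian reads
\begin{equation*}
L(p,k,q,\mu) = k(1-eq) + p(Aq-\mu e^T) + \mu.
\end{equation*}
Minimizing over $k\in\mathbb R$ forces $eq=1$, and minimizing over $p\geq 0$ forces $Aq\geq \mu e^T$; otherwise the infimum is $-\infty$. Maximizing the remaining $\mu$ under these constraints (and renaming $\mu$ as $k$) gives precisely (\ref{eqn-catchingprobalindual}). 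The primal is clearly feasible (take any probability distribution $p$ and $k=1$) and bounded below by $0$, so strong duality applies and both optimal values coincide with $k_\Sigma(t)$.

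The only substantive step is the first one, where one needs to argue carefully that \emph{every} basic strategy $(W,i)$ of Player One can be identified with a column of $A$, and that conversely every reachable column arises from some $(W,i)$; the set $A$ of Definition \ref{DefA} is designed precisely so that this correspondence is tight. The duality step is then a purely mechanical application of standard LP theory, so I do not expect it to be an obstacle.
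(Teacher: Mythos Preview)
Your argument is correct. The first step---identifying each basic strategy $(W,i)$ of Player One with a column of $A(t)$ and rewriting the inner maximum as the smallest $k$ with $pA\le ke$---is exactly the intended reduction, and your Lagrangian computation for the dual is clean; feasibility and boundedness of the primal are immediate (e.g.\ $p=e/n$, $k=1$), so strong duality applies.

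As for comparison: the present paper does \emph{not} prove this theorem. It is quoted as Theorem~1 of \cite{jungers_sync_12} in the preliminary section, with the explicit remark that ``the details of the proofs can be found in \cite{jungers_sync_12}.'' So there is no in-paper proof to compare against; your sketch is precisely the standard two-step argument (min--max rewritten as an LP, then LP duality) that one would expect the original reference to contain.
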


The linear Program (\ref{eqn-catchingprobalindual}) is the dual of Program (\ref{eqn-catchingprobalin}). In the primal (\ref{eqn-catchingprobalin}), the optimal objective value $k(p)$ is obtained with the strategy of Player Two, $p$, which is a probability distribution on the states. In the dual (\ref{eqn-catchingprobalindual}), the optimal objective value $k(q)$ is obtained with the strategy of Player One, $q$, which is a probability distribution on the set of reachable vectors. For any primal feasible solution $p$ and any dual feasible solution $q$, the objective value $k(p)$ of Program (\ref{eqn-catchingprobalin}) and the objective value $k(q)$ of Program (\ref{eqn-catchingprobalindual}) satisfy $k(p)\geq k(q)$. Therefore, if the objective value $k$ is the same for both programs with feasible solutions $p$ and $q$, this value is the optimum (see \cite{BoydBook} for more details on convex optimization and linear programming).

\begin{example}
\label{A3Solution}
Let us consider the automaton of Fig. \ref{ExampleAutomaton} and $t=3$. The set of reachable vectors is given in Example \ref{A3}. On the one hand $p=(1/4, 1/4, 1/4, 1/4)$ is an admissible solution for Program (\ref{eqn-catchingprobalin}) (i.e., a probability distribution on the \rev{states} for Player Two), which gives as objective value $k(p)=1/2$. On the other hand $q=(0, 0, 0, 0, 1/2, 0, 1/2)^T$ is an admissible solution for Program (\ref{eqn-catchingprobalindual}) (i.e., a probability distribution on the \rev{columns of words in $\Sigma^{\leq 3}$}), which also gives the objective value $k(q)=1/2$. Therefore, the SPF at $t=3$ is equal to $k(3)=1/2$ (as shown in Fig. \ref{ExampleSPF}). In other words, this means that if both players play optimally, with words of length at most 3, Player One has \rev{probability $1/2$ of winning the game}.
\demo
\end{example}

By leveraging classical results from convex optimization, one can derive strong properties on the optimal strategies in the above-defined game.

 \begin{theorem}[Theorem 2 in \cite{jungers_sync_12}]
\label{THM2}
For any pair of optimal solutions $(p^*(t), q^*(t))$ of Programs (\ref{eqn-catchingprobalin}) and (\ref{eqn-catchingprobalindual}), we have

$q^*_j(k-(p^{*}A)_j)=0$ for all $1\leq j \leq m(t)$ and

$p^*_i((Aq^*)_i-k)=0$ for all $1\leq i \leq n.$
 
 \end{theorem}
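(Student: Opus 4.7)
The plan is to recognize this statement as the classical complementary slackness conditions for the primal-dual pair of linear programs (\ref{eqn-catchingprobalin}) and (\ref{eqn-catchingprobalindual}), and derive it directly from strong LP duality without invoking any external black box.

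First I would observe that by strong duality (which holds since both programs are feasible and bounded), the optimal primal and dual objective values coincide and are both equal to $k$. The crucial algebraic identity is the associativity of matrix multiplication: the scalar $(p^*A)q^*$ equals $p^*(Aq^*)$. I will bound this quantity from both sides using the two feasibility constraints.

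From the primal feasibility $p^*A \leq k e$ and $q^* \geq 0$, taking the inner product with $q^*$ gives
\begin{equation*}
(p^*A)q^* = \sum_{j=1}^{m(t)} (p^*A)_j\, q^*_j \;\leq\; k \sum_{j=1}^{m(t)} q^*_j = k,
\end{equation*}
where the last equality uses $eq^* = 1$. Symmetrically, from $Aq^* \geq k e^T$ and $p^* \geq 0$, together with $ep^{*T}=1$, I obtain $p^*(Aq^*) \geq k$. Since the two scalars are equal, both inequalities are in fact equalities, yielding
\begin{equation*}
\sum_{j=1}^{m(t)} q^*_j\bigl(k-(p^*A)_j\bigr) = 0 \quad \text{and} \quad \sum_{i=1}^{n} p^*_i\bigl((Aq^*)_i - k\bigr) = 0.
\end{equation*}

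The final step is to note that every summand in the first sum is nonnegative, being the product of $q^*_j \geq 0$ with $k-(p^*A)_j \geq 0$ (by primal feasibility); similarly for the second sum. A sum of nonnegative terms equal to zero forces each individual term to vanish, which is exactly the pair of claimed identities. I do not expect any genuine obstacle here, since the argument is entirely mechanical once strong duality is invoked; the only small care needed is to make sure the dimensions of $p^*$, $A$, and $q^*$ line up so that associativity $(p^*A)q^* = p^*(Aq^*)$ may be applied.
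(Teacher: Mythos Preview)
Your proposal is correct and is precisely the standard complementary slackness argument from strong LP duality. The present paper does not actually prove this theorem; it merely quotes it from \cite{jungers_sync_12} and remarks that it follows ``by leveraging classical results from convex optimization,'' which is exactly the route you have taken.
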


In the following, our main arguments will be based on the dimension of the set of optimal strategies of Program (\ref{eqn-catchingprobalindual}):

 \begin{definition}[Definition 3 in \cite{jungers_sync_12}]
 Let $\Sigma$ be an automaton and $t$ be a positive integer. The polytopes $P_t$ and $Q_t$ are the sets of optimal solutions of respectively Program (\ref{eqn-catchingprobalin}) and Program (\ref{eqn-catchingprobalindual}). We call \emph{dimension} of a polytope the dimension of the smallest affine subspace containing the polytope.
 \end{definition}

\begin{example}
In Example \ref{A3}, $P_3$ and $Q_3$ are the following sets:
\rev{\begin{equation}
\nonumber
P_3=
\left\lbrace\begin{array}{l|l}
&-1/4\leq x \leq 1/4,\\[-0.05cm]
(1/4+x, 1/4-x, 1/4+y, 1/4-y)& -1/4\leq y \leq 1/4,\\[-0.05cm]
& x-y\leq 0
\end{array}\right\rbrace 
\end{equation}
\begin{eqnarray}
\nonumber
Q_3&=&\{(0, 0, 0, 0, 1/2, 0, 1/2)^T\}.
\end{eqnarray}}
These are the only solutions allowing for an objective value $k=1/2$.
\demo
\end{example}

Moreover, if the SPF does not increase, we have the following result on $P_t$:

\begin{lemma}[Lemma 1 in \cite{jungers_sync_12}]
\label{inclusion}
If $k(t)=k(t+1)$ then $P_{t+1}\subset P_t$.\end{lemma}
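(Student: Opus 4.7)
The plan is to exploit the monotonicity of the set $A(t)$ of reachable columns and the fact that the primal constraints at time $t+1$ are a superset of those at time $t$. Concretely, any $p$ that beats Player One at time $t+1$ automatically beats him at time $t$, because fewer words (hence fewer columns of $A$) are available to him.

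First, I would pick an arbitrary $p \in P_{t+1}$. By definition of $P_{t+1}$ and Program (\ref{eqn-catchingprobalin}), we have $p \geq 0$, $ep^{T}=1$, and $pA(t+1) \leq k(t+1)\, e$, with $p$ attaining the optimal value $k(t+1)$ of the primal. Now invoke the inclusion $A(t)\subseteq A(t+1)$ noted just after Definition \ref{DefA}: every column of $A(t)$ is also a column of $A(t+1)$, so the inequality $pA(t+1)\leq k(t+1)\, e$ restricts to $pA(t)\leq k(t+1)\, e$.

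Second, I would use the hypothesis $k(t)=k(t+1)$ to rewrite this as $pA(t)\leq k(t)\, e$. Together with the already-verified $ep^T = 1$ and $p\geq 0$, this shows that $p$ is feasible for Program (\ref{eqn-catchingprobalin}) at time $t$, and that its objective value (the smallest $k$ such that $pA(t)\leq k\, e$) is at most $k(t)$. Since $k(t)$ is by definition the minimum of this objective over all feasible $p$, the value achieved by our $p$ is exactly $k(t)$, so $p\in P_t$.

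Since $p\in P_{t+1}$ was arbitrary, we conclude $P_{t+1}\subseteq P_t$. There is essentially no obstacle here beyond carefully keeping track of which inequalities live at which time index; the argument is a direct consequence of the fact that increasing $t$ only adds, never removes, basic strategies for Player One, so any primal-feasible $p$ for the larger program is automatically primal-feasible (and optimal, given the equality $k(t)=k(t+1)$) for the smaller one.
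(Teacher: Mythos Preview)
Your argument is correct and is essentially the standard one: since $A(t)\subseteq A(t+1)$, any primal-optimal $p$ at time $t+1$ satisfies $pA(t)\leq k(t+1)e=k(t)e$ and hence is feasible, thus optimal, at time $t$. This is exactly the approach taken in the original reference \cite{jungers_sync_12}, from which the present paper quotes the lemma without reproving it.
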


With these tools in hand, we can get to our contributions.

\section{A New Bound on the Triple Rendezvous Time}
The \emph{\rev{triple rendezvous time (TRT)}} is the length of the shortest word mapping three states of the automaton onto a single one. Although it is a very natural concept, we are not aware of any attempts to bound its value for synchronizing automata. In what follows, the \emph{weight} of a vector is the number of its non-zero elements.

 \begin{definition}
 For a synchronizing automaton $\Sigma$, the \emph{triple rendezvous time} $T_{3,\Sigma}$ is defined as the smallest integer $t$ such that $A(t)$ contains a column of weight superior or equal to 3.
 \end{definition}

In other words, \rev{the} triple rendezvous time is the length of the shortest word $W$ such that there exist states $q_i$, $q_j$ and $q_k$ with $q_iW=q_jW=q_kW$. In the following, we will use $T_3$ for $T_{3,\Sigma}$ when there is no ambiguity on the automaton.

Of course, we can extend this concept to $T_l$, the length of the shortest word which maps $l$ states onto a single one, i.e., the length of the shortest word $W$ such that there exist $l$ states $q_{i}, q_{j}, \dots $ such that $q_{i}W=q_{j}W=\dots$. We notice that $T_n$ is the length of the shortest synchronizing word, and that for any synchronizing automaton, $T_2=1$.

\rev{Our motivations for studying the triple rendezvous time are numerous. First, it is a natural problem related to Conjecture \ref{cernyconj}, and it allows for new approaches to study synchronization properties of an automaton. Second, the TRT is directly linked with the evolution of the synchronizing probability function (Proposition 6 and Conjecture 4 in \cite{jungers_sync_12}), and is also related to the $k-$extension property developed in \cite{Berlinkov-extension}\footnote{In the terminology of \cite{Berlinkov-extension}, \rev{the TRT} can be defined as the smallest integer such that there is a pair of states which are synchronized by some single letter, and which is \rev{$(T_3-1)-extendable$}.
}. Third, the triple rendezvous time can also be used as an indicator to find automata with large reset threshold. Indeed, for many classes of automata, the value of the TRT seems empirically to be correlated with the length of the shortest reset word: for random synchronizing automata (from the framework of \cite{skvortsov2011experimental}), the reset threshold is small and $T_3=1$  with high probability\footnote{In the framework of \cite{skvortsov2011experimental}, automata are synchronizing with high probability and have with high probability three states mapped on a single one by one letter, which implies that $T_3=1$ with high probability.}. Oppositely, all the automata known in the literature which achieve the bound of Conjecture \ref{cernyconj} do have a large \rev{TRT}\footnote{The reader can easily check that the automata of the \v Cern{\'y} family have $T_3=n+1$, the \rev{``Kari automaton''} (an automaton with $6$ states, see \cite{JKari01}) has $T_3=5$ and the \rev{``Roman automaton''} (an automaton with $5$ states, see \cite{Roman}) has $T_3=5$.}\footnote{Please note that it is possible to build automata with quadratic reset threshold and low TRT, see \cite{Ananichev200730} for such examples.}(close to the number of states). As the triple rendezvous time is much easier to compute than the shortest synchronizing word, it can be used as a heuristic filter in order to generate automata with large reset threshold.}

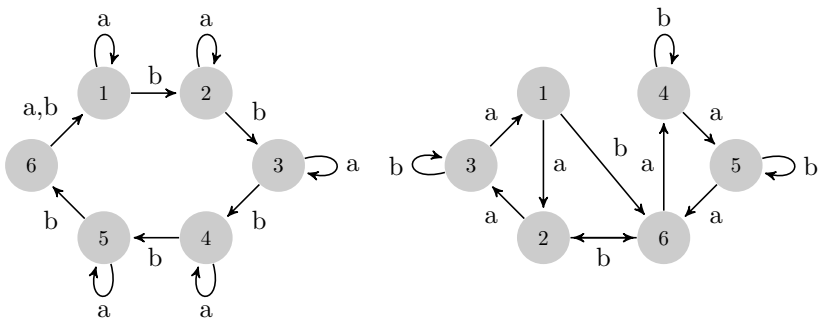
\begin{figure}[h!]
\begin{center}
\scalebox{1}{
\begin{tikzpicture}[->,>=stealth',shorten >=1pt,auto,node distance=1.7cm,
                    semithick]
  \tikzstyle{every state}=[fill=light-gray,draw=none,text=black, scale=0.8]

  \node[state] 			(A)                    {$1$};
  \node[state]         	(B) [right of=A] {$2$};
  \node[state]         	(C) [below right of=B] {$3$};
  \node[state]         	(D) [below left of=C] {$4$};
  \node[state]         	(E) [left of=D] {$5$};
  \node[state]         	(F) [below left of=A] {$6$};

  \path (A) edge              	node {b} (B)
  			edge [loop above]   node {a} (B)
        (B) edge              	node {b} (C)
            edge [loop above]   node {a} (B)
        (C) edge              	node {b} (D)
         	edge [loop right]   node {a} (B)
        (D) edge  			  	node {b} (E)
            edge [loop below]   node {a} (B)
        (E) edge  			  	node {b} (F)
            edge [loop below]   node {a} (B)
        (F) edge  			  	node {a,b} (A);
        
  \node[state]         	(L) [right of=C, node distance=3.2cm] {$3$};      
  \node[state] 			(G) [above right of=L] {$1$};
  \node[state]         	(H) [right of=G, node distance=2cm] {$4$};
  \node[state]         	(I) [below right of=H] {$5$};
  \node[state]         	(J) [below left of=I] {$6$};
  \node[state]         	(K) [left of=J, node distance=2cm] {$2$};

  \path (G) edge              	node {a} (K)
  			edge     			node {b} (J)
        (H) edge              	node {a} (I)
            edge [loop above]   node {b} (B)
        (I) edge              	node {a} (J)
         	edge [loop right]	node {b} (J)
        (J) edge  			  	node {b} (K)
            edge 				node {a} (H)
        (K) edge  			  	node {a} (L)
            edge    			node {} (J)
        (L) edge  			  	node {a} (G)
        	edge [loop left]	node {b} (I);

\end{tikzpicture}}
\end{center}
\caption{On the left the automaton of the \v Cern{\'y} family with 6 states, on the right \rev{the ``Kari automaton''}.}
\label{KariCzerny}
\end{figure}

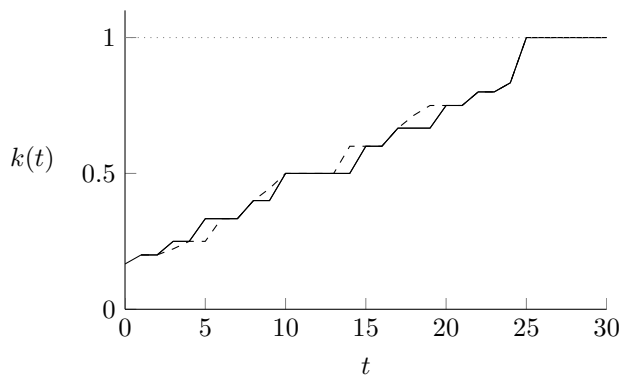
\begin{figure}[h!]
\begin{center}
\scalebox{1}{ 
\begin{tikzpicture}

\begin{axis}[%
width=2.52083333333333in,
height=1.565625in,
scale only axis,
xmin=0,
xmax=30,
xlabel={$t$},
ymin=0,
ymax=1.1,
ylabel={$k(t)$},
ylabel style={rotate=-90},
axis x line*=bottom,
axis y line*=left
]
\addplot [color=black,solid,forget plot]
  table[row sep=crcr]{%
1	0.200000000009069\\
2	0.199999999839946\\
3	0.250000003494876\\
4	0.250000000021004\\
5	0.333333333344143\\
6	0.333333333333414\\
7	0.333333333332746\\
8	0.400000000030815\\
9	0.40000000003424\\
10	0.500000002420791\\
11	0.500000000006423\\
12	0.499999999869416\\
13	0.499999999972829\\
14	0.499999999020076\\
15	0.599999999973505\\
16	0.600000003878179\\
17	0.666666666564197\\
18	0.66666666897568\\
19	0.666666666666799\\
20	0.749999999975103\\
21	0.750000000003013\\
22	0.800000000039944\\
23	0.800000000143612\\
24	0.833333335609524\\
25	1.00000000000172\\
26	0.999999999964388\\
27	0.999999999964388\\
28	0.999999999964388\\
29	0.999999999964388\\
30	0.999999999964388\\
};
\addplot [color=black,dashed,forget plot]
  table[row sep=crcr]{%
1	0.200000000000017\\
2	0.200000000451055\\
3	0.22222222098344\\
4	0.249999999984922\\
5	0.249999999978911\\
6	0.333333333279143\\
7	0.333333333333627\\
8	0.399999999997362\\
9	0.444444444445026\\
10	0.500000000018531\\
11	0.500000000606804\\
12	0.500000000011596\\
13	0.500000001982784\\
14	0.599999999998715\\
15	0.599999999935363\\
16	0.600000000114164\\
17	0.666666666668618\\
18	0.714285714290725\\
19	0.750000000000227\\
20	0.750000000000483\\
21	0.750000000038796\\
22	0.800000000000836\\
23	0.799999999739413\\
24	0.833333332880443\\
25	1.00000000005194\\
26	0.999999999964388\\
27	0.999999999964388\\
28	0.999999999964388\\
29	0.999999999964388\\
30	0.999999999964388\\
};
\addplot [color=black,solid,forget plot]
  table[row sep=crcr]{%
1	0.200000000009069\\
2	0.199999999839946\\
3	0.250000003494876\\
4	0.250000000021004\\
5	0.333333333344143\\
6	0.333333333333414\\
7	0.333333333332746\\
8	0.400000000030815\\
9	0.40000000003424\\
10	0.500000002420791\\
11	0.500000000006423\\
12	0.499999999869416\\
13	0.499999999972829\\
14	0.499999999020076\\
15	0.599999999973505\\
16	0.600000003878179\\
17	0.666666666564197\\
18	0.66666666897568\\
19	0.666666666666799\\
20	0.749999999975103\\
21	0.750000000003013\\
22	0.800000000039944\\
23	0.800000000143612\\
24	0.833333335609524\\
25	1.00000000000172\\
26	0.999999999964388\\
27	0.999999999964388\\
28	0.999999999964388\\
29	0.999999999964388\\
30	0.999999999964388\\
};
\addplot [color=black,solid,forget plot]
  table[row sep=crcr]{%
0	0.166666666666667\\
1	0.2\\
};
\addplot [color=gray,dotted,forget plot]
  table[row sep=crcr]{%
0	1\\
30	1\\
};
\end{axis}
\end{tikzpicture}}%
\end{center}
\caption{Synchronizing probability function for the automaton of the \v Cern{\'y} family with 6 states (solid curve), and for \rev{the ``Kari automaton''} (dashed curve).}
\label{KFuncKariCzerny}
\end{figure}

Figure \ref{KariCzerny} shows the automaton of the \v Cern{\'y} family with 6 states and \rev{the ``Kari automaton''} \cite{JKari01}, two automata achieving the bound of Conjecture \ref{cernyconj}, with synchronizing words of length 25. Figure \ref{KFuncKariCzerny} shows their SPF. For these automata, the \rev{TRT} is equal to 7 and 5 respectively. 
 
The following conjecture has been made on the behaviour of the SPF: 

\begin{conj} [Conjecture 2 in \cite{jungers_sync_12}]\label{ConjK}
In a synchronizing automaton $\Sigma$ with $n$ states, for any $1\leq j\leq n-1$, $$ k_{\Sigma}(1+(j-1)(n+1))\geq j/(n-1).$$
\end{conj}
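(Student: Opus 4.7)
The plan is to attempt Conjecture \ref{ConjK} by induction on $j$, combining the primal-dual game formulation of the SPF with the monotonicity and complementarity properties from Lemma \ref{inclusion} and Theorem \ref{THM2}. The base case $j=1$ amounts to $k_\Sigma(1)\geq 1/(n-1)$, which is accessible: since $\Sigma$ is synchronizing, not every letter can be a permutation, so some letter $L$ merges two states $s_1,s_2$ into a state $s^*$; the $s^*$-th column $v$ of $L$ then has weight at least $2$. Using $v$ together with the $n-2$ identity columns $e_i^T$ for $i\notin\{s_1,s_2\}$, the dual strategy assigning mass $1/(n-1)$ to each of these $n-1$ columns satisfies $Aq \geq (1/(n-1))e^T$, so $k_\Sigma(1)\geq 1/(n-1)$ follows from Program (\ref{eqn-catchingprobalindual}).

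For the inductive step, I would assume $k_\Sigma(t_j)\geq j/(n-1)$ with $t_j=1+(j-1)(n+1)$, and try to derive $k_\Sigma(t_{j+1})\geq (j+1)/(n-1)$ at $t_{j+1}=t_j+n+1$. The natural way to split the conclusion is into a \emph{progress} part (within any window of $n+1$ consecutive indices on which $k_\Sigma<1$, the SPF strictly increases at least once) and a \emph{quantization} part (every strict increase of $k_\Sigma$ is by at least $1/(n-1)$). The progress part would be argued by contradiction using Lemma \ref{inclusion}: if $k_\Sigma$ stayed constant on the window, the chain $P_{t_{j+1}}\subset\cdots\subset P_{t_j}$, together with fresh active constraints $pA\leq ke$ introduced whenever $A(t+1)\supsetneq A(t)$, would strictly cut the affine dimension of the polytope at each step; since $P_{t_j}$ lives in a space of dimension at most $n-1$, after $n+1$ such shrinkings the complementarity identities of Theorem \ref{THM2} would pin down an essentially unique $p^*$ whose support is incompatible with any non-synchronizing word, yielding a contradiction.

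The quantization part is where I expect the argument to break down, and it is therefore the main obstacle. The optimal value $k_\Sigma(t)$ is attained at a vertex of a polyhedron whose constraint matrix is the $0/1$ matrix $A(t)$, so it is rational with denominator at most the largest absolute subdeterminant of $A(t)$. There is however no a priori reason for that denominator to be exactly $n-1$; Hadamard-type estimates already admit denominators exponentially large in $n$. Unless one can uncover some hidden rigidity specific to the vertices reached as SPF optima — perhaps by exploiting that the columns of $A(t)$ are $0/1$ vectors produced by products of row-stochastic matrices — the clean arithmetic progression $j/(n-1)$ is fragile. This is precisely the step where one would expect to engineer a counterexample family, and Section 4 indicates that such a family indeed exists.
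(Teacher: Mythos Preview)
The statement you are asked to prove is Conjecture~\ref{ConjK}, and the paper does \emph{not} prove it --- it refutes it. Section~4 constructs the family $\mathcal{TR}_n$ and shows, for instance, that $k_{\mathcal{TR}_9}(11)=2/9<2/8$, violating the case $j=2$, $n=9$. So there is no paper-proof to compare your proposal against; the relevant comparison is between your diagnosis of where the argument breaks and the paper's counterexample.

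Your diagnosis is essentially correct and consistent with the refutation. The base case $j=1$ is valid: your dual construction with one weight-two column and $n-2$ identity columns is exactly the standard argument giving $k(1)\geq 1/(n-1)$. The breakdown occurs precisely at the quantization step. By Theorem~\ref{cor-stagnate}, the values $k_\Sigma(t)$ attained while $t<T_3$ are of the form $2/(n+s)$ for $0\le s\le n-1$, and consecutive such values differ by increments strictly smaller than $1/(n-1)$. In $\mathcal{TR}_9$ the SPF does increase several times on the window $[1,11]$ --- your ``progress'' intuition does not fail there --- but from $1/8$ it climbs through $1/7,\,1/6,\,1/5,\,2/9$ rather than gaining a full $1/8$, so the inductive step cannot be closed. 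The gap you flagged is exactly the gap that the counterexample exploits.

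One further remark: your sketch of the progress part is itself incomplete as stated. Lemma~\ref{inclusion} gives only the inclusion $P_{t+1}\subset P_t$, not a strict drop in dimension. The strict drop requires the additional argument, carried out inside the proof of Theorem~\ref{cor-stagnate}, that some genuinely new critical column $Ma_j$ appears and imposes a fresh equality $p^T(Ma_j)=k$ that properly cuts $P_t$. Without that extra step, the dimension chain could stall and your contradiction after $n+1$ steps would not follow. This does not affect the overall verdict --- the conjecture is false regardless --- but it means neither half of your proposed inductive step is actually established.
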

This conjecture is stronger than \v Cern{\'y}'s conjecture (Theorem 4 in \cite{jungers_sync_12}). Conjecture \ref{ConjK} would also imply (see \cite{jungers_sync_12}) that the following conjecture about the triple rendezvous time is true:

\begin{conj} [Conjecture 4 in \cite{jungers_sync_12}]\label{ConjT3}
In a synchronizing automaton $\Sigma$ with $n$ states, 
$$T_{3,\Sigma}\leq n+2.$$
\end{conj}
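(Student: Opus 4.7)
The plan is to translate the triple rendezvous question into a statement about the synchronizing probability function and then exploit the game-theoretic LP machinery of Section~2. First I would prove the elementary dictionary: $T_{3,\Sigma} > t$ if and only if every column of $A(t)$ has weight at most $2$. The forward direction is just the definition of $T_3$, and the converse is immediate. In that case, for any feasible dual solution $q$, summing the constraint $Aq \geq k e^T$ over the $n$ coordinates yields $n\, k_\Sigma(t) \leq e A q \leq 2\, eq = 2$, since every column of $A(t)$ sums to at most $2$. Consequently, $T_{3,\Sigma} > t$ forces $k_\Sigma(t) \leq 2/n$. Contrapositively, it suffices to prove the strict inequality $k_\Sigma(n+2) > 2/n$ in every synchronizing automaton with $n \geq 3$ states.

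Next I would control the growth of $k_\Sigma$ by tracking the dimension of the polytope $P_t$ of optimal Player Two strategies. We have $k_\Sigma(0) = 1/n$ and $\dim(P_0) = n-1$ (the whole simplex of distributions on the states). By Lemma~\ref{inclusion}, each step where the SPF is stationary yields $P_{t+1} \subseteq P_t$. The technical heart of the argument would be a claim asserting that, as long as $k_\Sigma(t) \leq 2/n$, each genuinely new reachable vector appearing in $A(t+1) \setminus A(t)$ activates via the complementary slackness conditions of Theorem~\ref{THM2} a fresh independent constraint on $P_{t+1}$, so that $\dim(P_{t+1}) < \dim(P_t)$. This rigidity should hold because below the threshold $2/n$ the columns of $A(t)$ have weight $1$ or $2$, giving them a very restricted combinatorial structure: the support of a weight-$2$ column is just an unordered pair, and if $P_t$ were already concentrated on distributions uniform on that pair, one can hope to show that either a previously inactive column is forced tight or a weight-$3$ column is produced at the next step.

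Combining these ingredients, a counting argument closes the proof. Between two strict increases of $k_\Sigma$ the dimension of $P_t$ must strictly decrease, so an increase must occur within every window of length at most $n-1$. Starting from $k_\Sigma(0) = 1/n$ and using that in this low regime the rational values taken by $k_\Sigma$ have controlled denominators (bounded by the number of distinct reachable columns), two strict increases already push the SPF above $2/n$; fitting both of them into the budget $t \leq n+2$ would deliver the target bound.

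The main obstacle is exactly the dimension-drop claim above. Lemma~\ref{inclusion} provides inclusion but not strict inequality of dimensions, and the stronger Conjecture~\ref{ConjK}, which would immediately imply $T_{3,\Sigma} \leq n+2$, is actually refuted later in Section~4 of this paper. Hence one cannot hope for a uniform dimension drop at every stationary step in general; one has to exploit the fact that in the weight-$\leq 2$ regime the columns are rigid enough that each new one genuinely cuts $P_t$, and that the threshold $2/n$ is low enough to preclude the long plateaus that defeat Conjecture~\ref{ConjK} at higher levels. Designing this finer combinatorial-LP argument, tailored specifically to weight-$2$ columns, is where I would expect most of the work to live and where the boundary between what is provable and what the refuted conjecture oversteps really lies.
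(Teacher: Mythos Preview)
The proposal attempts to \emph{prove} the conjecture, but the paper does not prove it --- it \emph{refutes} it. Section~4 exhibits the automaton $\mathcal{TR}_9$ (and more generally the family $\mathcal{TR}_{2k+7}$) for which $T_{3,\mathcal{TR}_9}=12>11=n+2$. You even cite Section~4 when noting that Conjecture~\ref{ConjK} is refuted there, but the very same counterexamples are announced to refute Conjecture~\ref{ConjT3} as well; the paper computes $k_{\mathcal{TR}_9}(11)=2/9$, so your sufficient condition $k_\Sigma(n+2)>2/n$ fails already for $n=9$.

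Concretely, two of your intermediate claims collapse on this example. First, ``two strict increases already push the SPF above $2/n$'' is false: Theorem~\ref{cor-stagnate} shows that for $t<T_3$ the SPF ranges over the values $2/(n+s)$, $0\le s\le n-1$, so getting from $1/n=2/(2n-1+1)$ up to $2/n$ may require of order $n$ increases, not two; for $\mathcal{TR}_9$ the SPF passes through $1/9,1/8,1/7,1/6,1/5$ before reaching $2/9$. Second, the hoped-for uniform strict dimension drop at every stationary step in the weight-$\le 2$ regime cannot hold in the form you need: the same theorem allows the value $2/(n+s)$ to persist for $\lfloor (n-s)/2\rfloor+1$ consecutive steps, and $\mathcal{TR}_9$ realises a plateau of length five at $k=1/5$. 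The ``rigidity'' you hope for at weight $2$ is exactly what Lemmas~\ref{thm-npluss}--\ref{LemmaDim} quantify, and their output is the quadratic bound of Theorem~\ref{FirstT3bound}, not $n+2$.
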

In Section 4, we provide a family of automata which are counterexamples for both Conjecture \ref{ConjK} and Conjecture \ref{ConjT3}. 

We now focus on bounding the \rev{TRT}. A first upper bound can be easily obtained without using the SPF:


\begin{proposition}
In a synchronizing automaton $\Sigma$ with $n$ states, 
$$T_{3,\Sigma}\leq \frac{n(n-1)}{2}+1.$$
\label{InitialT3bound}
\end{proposition}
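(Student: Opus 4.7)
My plan is to combine the elementary fact that $T_{2,\Sigma}=1$ in every synchronizing automaton with the classical ``pair automaton'' reachability bound. Since $\Sigma$ is synchronizing, not every letter can be a permutation of $Q$, so there exists a letter $\ell \in \Sigma$ together with two distinct states $p, q \in Q$ such that $p\ell = q\ell =: r$. If $|Q\ell| = 1$, then $\ell$ itself already maps all $n \geq 3$ states to $r$, so $T_{3,\Sigma} = 1 \leq \frac{n(n-1)}{2} + 1$ and we are done. Otherwise there exists $s \in Q\ell \setminus \{r\}$; writing $s = t\ell$, the inequality $t\ell = s \neq r = p\ell = q\ell$ forces $t \notin \{p, q\}$, so we have three distinct states $p, q, t$ whose $\ell$-images are the two distinct states $r, s$.

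The key observation is that to merge the three states $p, q, t$ by a word it suffices to synchronize the pair $\{r, s\}$, because any word $w$ with $rw = sw$ satisfies $p(\ell w) = q(\ell w) = t(\ell w) = rw$. To bound $|w|$ I invoke the classical pair automaton on $\Sigma$: its states are the $\binom{n}{2}$ unordered pairs of distinct elements of $Q$, together with a sink representing collapsed pairs, and its transitions are $\{a, b\}\cdot \sigma = \{a\sigma, b\sigma\}$, identified with the sink whenever $a\sigma = b\sigma$. Because $\Sigma$ is synchronizing, the sink is reachable from every pair, so there is a path from $\{r, s\}$ to the sink; a shortest such path never revisits a vertex in a graph with $\binom{n}{2}+1$ vertices, hence has at most $\binom{n}{2}$ edges. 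This yields a word $w$ of length at most $\frac{n(n-1)}{2}$ with $rw = sw$.

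Concatenating, we obtain a word $\ell w$ of length at most $1 + \frac{n(n-1)}{2}$ that maps the three distinct states $p, q, t$ onto the single state $rw$, establishing the claimed inequality $T_{3,\Sigma} \leq \frac{n(n-1)}{2} + 1$. There is no real obstacle in this argument; the two subtle points — the case split on $|Q\ell|$ and the need to exhibit a third preimage distinct from $p, q$ — are both handled by elementary counting. Sharpening the constant would require improving the pair-automaton reachability bound itself, which is precisely what the subsequent Theorem (and the SPF machinery of the paper) is designed to do.
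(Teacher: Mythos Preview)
Your proof is correct, but it takes a genuinely different route from the paper's. The paper argues on the \emph{backward} side: it observes that $A(t)$, the set of reachable column vectors, strictly grows with $t$ (as long as $t$ is below the reset threshold), that $A(0)$ already contains the $n$ weight-one columns, and that there exist only $n+\binom{n}{2}$ distinct $0$--$1$ columns of weight at most two. A pigeonhole count then forces a column of weight at least three in $A\bigl(\binom{n}{2}+1\bigr)$.

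You instead work on the \emph{forward} side: pick a letter $\ell$ collapsing some pair $\{p,q\}$ to $r$, find a third state $t$ with $t\ell=s\neq r$, and then synchronize the pair $\{r,s\}$ via the standard pair-automaton reachability bound of $\binom{n}{2}$. The concatenation $\ell w$ is an explicit word of length at most $\binom{n}{2}+1$ merging $p,q,t$. Your argument is constructive and entirely self-contained (it never mentions $A(t)$ or the SPF), which is pedagogically nice; the paper's argument, on the other hand, is phrased in the $A(t)$ language precisely because the subsequent improvements (Theorems~\ref{cor-stagnate} and~\ref{FirstT3bound}) refine this same counting by exploiting the SPF to control how many new weight-two columns can appear at each step. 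So the paper's version, while less explicit, is the one that feeds directly into the sharper bounds that follow.
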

\begin{proof}
 Firstly, there are only  $n+n(n-1)/2$ possible different columns of weight one or two in $A$. Secondly, for any positive integer $t$ smaller than the \rev{reset threshold} (therefore also smaller than $T_3$), $A(t+1)$ must contain columns that are not in $A(t)$ (Lemma 1 in \cite{jungers_sync_12}). Therefore, as $A(0)$ includes the $n$ columns of weight one, $A(n(n-1)/2+1)$ includes at least $n+n(n-1)/2+1$ columns. As all the columns in $A$ are different, it implies that $A(n(n-1)/2+1)$ contains at least a column of weight superior or equal to 3.
\end{proof}

In order to obtain a better upper bound on \rev{the TRT}, we study the evolution of the SPF and $A(t)$ \rev{with respect to $t$}, for $t<T_3$. In that situation, $A(t)$ only contains columns of weight one or two. In the following, we name \emph{subset of columns} of a matrix $A \in \mathbb R^{n\times m}$ any matrix $A' \in \mathbb R^{n\times m'}$  (for some $m'\leq m$) obtained by erasing columns from $A$. A subset of columns of $A$ with some property is the subset of columns obtained by erasing the columns of $A$ not satisfying that property, and keeping all the others.

\begin{definition}We associate with $A(t)$ a graph $G(t)$ with $n$ vertices such that the subset of columns of weight two of $A(t)$ is the incidence matrix of $G(t)$.\end{definition}

\begin{example}
For the automaton from Example \ref{A3}, at $t=3$, the graph $G(3)$ associated with $A(3)$ is shown in Fig. \ref{GraphA3}.
\begin{figure}[h!]
\begin{center}
\scalebox{1}{
\begin{tikzpicture}[-,>=stealth',shorten >=1pt,auto,node distance=1.7cm,
                    semithick]
  \tikzstyle{every state}=[fill=light-gray,draw=none,text=black, scale=0.8]

  \node[state] (A)                    {$1$};
  \node[state]         (B) [above right of=A] {$2$};
  \node[state]         (D) [below right of=A] {$4$};
  \node[state]         (C) [below right of=B] {$3$};

  \path (B) edge              node {} (C)
        (C) edge              node {} (D)
        (D) edge  			  node {} (A);
\end{tikzpicture}}
\end{center}
\caption{Graph $G(3)$ associated with $A(3)$ for the automaton from Example \ref{A3}.}
\label{GraphA3}
\end{figure}
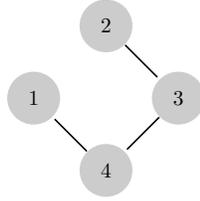
\demo
\end{example}

In the graph $G(t)$ associated with $A(t)$, we call a \emph{singleton} a vertex which is disconnected from the rest of the graph, a \emph{pair} two vertices which are connected to each other and disconnected from the rest of the graph, and a \emph{cycle} a set of vertices forming a cycle\footnote{$c$ vertices are forming a \emph{cycle} if we can number them from 1 to $c$ in such a way that vertex 1 is only connected to vertices 2 and $c$, each vertex $1<i<c$ is only connected to vertices $i-1$ and $i+1$, and the vertex $c$ is only connected to vertices $c-1$ and 1.} and disconnected from the rest of the graph. We call a cycle \emph{odd} (resp. \emph{even}) if it contains an odd (resp. even) number of vertices.

We also use the reverse correspondence.
Given a graph $G$ on $n$ vertices, each vertex $v_j$ is represented by $e_j^T,$ and an edge $(v_i,v_j)$ is represented by the vector $(e_i+e_j)^T.$
We notice that, \rev{if $1<t<T_3$, since $A(t)$ contains at least one column that is not in $A(t-1)$, $G(t)$ has at least one more edge than $G(t-1)$}.

Our general strategy to bound the \rev{TRT} is the following: first, based on this matrix-graph approach, we study the values that $k(t)$ can take with $t<T_3$. Then, we bound the maximal dimension that the polytope $P_t$ of solutions of Program (\ref{eqn-catchingprobalin}) can take for each value $k(t)$. We conclude by using the fact that this maximum strictly decreases if $k(t)$ stays constant.

To do so, we start from the set of reachable vectors $A(t)$. We prove that it is possible to find a subset of columns $A_c(t)$ of $A(t)$ satisfying the following properties; the matrix $A_c(t)$ is such that its associated graph $G_c(t)$ is composed of disjoint singletons, pairs and odd cycles, and is such that the optimal objective value for Program (\ref{eqn-catchingprobalin}) and Program (\ref{eqn-catchingprobalindual}) is the same if $A(t)$ is replaced by $A_c(t)$. This structure allows us to easily compute the value $k(t)$ and the dimension of the corresponding polytope of optimal solutions. Note that when replacing $A(t)$ with $A_c(t)$, the dimension of $P_t$ is non-decreasing with respect to $t$, as every optimal solution of Program (\ref{eqn-catchingprobalin}) with $A(t)$ is also an optimal solution with $A_c(t)$.

We call \emph{support} of the strategy $q$ of Program (\ref{eqn-catchingprobalindual}) the set of columns in $A(t)$ corresponding to non-zero entries in $q$. We call a column of $A(t)$ \emph{critical} if there exists an optimal solution $q$ with a non-zero entry corresponding to this column. 

In the proof of Lemma \ref{thm-npluss} hereunder, we split the program into two subprograms and introduce \rev{notation} to get to our arguments. The reader may refer to Example \ref{exampleThm} to have an illustration of the steps.

\begin{lemma}
If $t<T_3$, there exists an optimal solution q for Program (\ref{eqn-catchingprobalindual}) such that its support is associated with a graph composed of disjoint singletons, pairs, and odd cycles.
\label{thm-npluss} 
\end{lemma}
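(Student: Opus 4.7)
My plan is to start from an arbitrary optimal solution $q^*$ of Program~(\ref{eqn-catchingprobalindual}) and modify it iteratively via optimality-preserving perturbations until its support has the required shape. Since $t<T_3$, every column of $A(t)$ has weight one (a \emph{loop}) or two (an \emph{edge}), so the support of any feasible $q$ is encoded as a graph on $\{1,\dots,n\}$, possibly with self-loops at some vertices. An admissible perturbation $\delta$ supported on $\mathrm{supp}(q^*)$ is one satisfying $e\delta=0$ and $A\delta=0$ on every tight row of $Aq^*\geq k^*e^T$ (it then suffices that $A\delta\geq 0$ on the slack rows, for an appropriate sign of $\epsilon$), so that $q^*+\epsilon\delta$ stays in the optimal face for all sufficiently small $\epsilon>0$.

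The first reduction eliminates even cycles. If edges $e_1,\dots,e_{2k}$ form an even cycle in the support graph, the alternating perturbation $\delta_{e_i}=(-1)^i$ gives $A\delta=0$ everywhere (the $\pm 1$ contributions at each cycle vertex cancel) and $e\delta=0$. Increasing $\epsilon$ until the first $q^*_{e_i}$ hits zero leaves the objective unchanged and breaks the cycle. The second and harder reduction removes non-cyclic components with more than one edge, such as paths of length $\geq 2$ or more complex trees. In such a component I would select a leaf $u$ with incident edge $e=(u,v)$; feasibility at $u$ then forces $q^*_e\geq k^*$, while the presence of another positive-weight support entry at $v$ yields strict slack $(Aq^*)_v>k^*$. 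Using that the loop at $u$ belongs to $A(0)\subseteq A(t)$, the exchange $\delta_e=-1$, $\delta_{\mathrm{loop}(u)}=+1$ is an admissible direction: it preserves $eq=1$ and $(Aq)_u$, and only decreases the slack coordinate $(Aq)_v$. Taking $\epsilon=\min\bigl(q^*_e,\,(Aq^*)_v-k^*\bigr)$ either removes the edge (detaching $u$ as a singleton) or turns $v$ tight without enlarging the support, in which case we iterate at another leaf of the same component.

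Since each reduction strictly decreases either the number of support entries or the number of slack rows while preserving the objective value, the procedure terminates at an optimal $q$ whose support graph is a disjoint union of singletons, pairs, and odd cycles. The main obstacle I anticipate is the bookkeeping needed in the second reduction when several internal vertices of the same tree component become tight during the iteration: one must check that the sequence of exchanges cannot stall at a configuration that is neither canonical nor amenable to further perturbation. This is precisely where the announced splitting of the program into its weight-two and weight-one sub-problems helps: by treating the ``edges'' and ``loops'' parts of the support separately, the tight-row pattern can be tracked component by component, guaranteeing monotone progress toward the target structure.
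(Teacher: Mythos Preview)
Your perturbation approach is natural and parallels parts of the paper's argument, but as written it has a genuine gap: the two reductions you describe are not exhaustive. After eliminating all even cycles, the support graph need not decompose into odd cycles and trees; it can contain, for example, an odd cycle with a pendant edge attached, or two odd cycles sharing a single vertex (a ``figure-eight''). Your second reduction is announced only for ``non-cyclic components'' (trees), so such mixed components are never treated. Even in the tree case your bookkeeping slips: in the branch where $v$ becomes tight you \emph{do} enlarge the support by adding the loop at $u$, and the stated termination (``strictly decreases either the number of support entries or the number of slack rows'') is not a valid potential argument, since the first quantity can go up while the second goes down. You flag this difficulty yourself, but the proposed remedy---splitting into ``weight-two'' and ``weight-one'' sub-problems---is left vague and does not obviously resolve the mixed-component case.

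The paper sidesteps these issues by a different organisation. It first passes to a \emph{minimal} subset $A_{\min}$ of at most $n$ columns achieving the same optimum, so the associated graph has at most $n$ edges on $n$ vertices. It then inducts on the number of vertices: if $G_{\min}$ is disconnected, the LP splits into two smaller independent subprograms whose optimal solutions are recombined; if it is connected and has a degree-one vertex, a single explicit redistribution (close to your leaf move, but shifting \emph{all} weight from the other edges at $v$ onto the corresponding singletons and onto the pendant pair, so the pair detaches in one step) disconnects it. The payoff of the $\le n$-column bound is that a connected graph on $n$ vertices with at most $n$ edges and no degree-one vertex is necessarily a single cycle, which is either odd (done) or even (paired off into disjoint pairs). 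Without an edge bound of this kind, your scheme has no structural endgame for components that are neither trees nor single cycles.
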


\begin{proof} We first present the structure of the proof, then we demonstrate the technical elements.

\textbf{Structure of the proof.}
We proceed by induction on the number of rows of matrix $A(t)$, which is also the number of vertices of $G(t)$. If there is only one or two vertices, the lemma is trivially true. Otherwise, we use the fact that the graph associated with a subset of columns of $A(t)$ can either be connected or disconnected. 

If it is disconnected, we can define two subprograms with the structure of Program (\ref{eqn-catchingprobalindual}) with less variables, for which we know by induction that there exist optimal solutions satisfying the lemma. From these solutions, we can build an optimal solution of the original program whose support \rev{is also} composed of disjoint singletons, pairs, and odd cycles. 

If it is connected, we can either find an optimal solution with a support associated with an odd cycle including all the vertices, or find a solution with a support associated with a disconnected graph. In this latter case, we \rev{are again} able to split the program as in the disconnected case.

\textbf{Demonstration.} 
Let us write $Prog_A$ for \rev{the original} Program (\ref{eqn-catchingprobalindual}), and $Prog_N$ for Program (\ref{eqn-catchingprobalindual}) \rev{in which} matrix $A$ \rev{is} replaced by a matrix $N$, with the corresponding dimensions for vectors $q$ and $e$. Let us write $k_A$ and $k_N$ the optimal objective value of $Prog_A$ and $Prog_N$.

We suppose by induction that the Lemma holds if $A(t)$ is associated with a graph with $n-1$ vertices or less (so if $A(t)$ has $n-1$ rows or less). Let us now consider a matrix $A(t)$ associated with a graph $G(t)$ with $n$ vertices. Let $A_{min}$ be a \emph{minimal subset} of columns of $A(t)$ (that is, such that $k_A=k_{A_{min}}$, and $k_A'<k_A$ for any strict subset of columns $A'$ of $A_{min}$) and with not more than $n$ columns (such a matrix exists, see \cite[Proposition 4]{jungers_sync_12}). Let $G_{min}$ be the graph associated with $A_{min}$. We will prove that either $G_{min}$ is already composed of disjoint singletons, pairs and odd cycles, or that we can build an optimal solution $q_c$ to $Prog_{A_{min}}$ such that the graph associated with its support is composed of disjoint singletons, pairs and odd cycles. 
Since the program $Prog_{A_{min}}$ has the same optimal objective value as $Prog_A$, there is a vector $q$ which is an optimal solution of $Prog_{A_{min}}$ (such that $A_{min}q\geq k_Ae_{n\times 1}).$

Two cases can occur: either $G_{min}$ is connected, or it is disconnected. 

\textbf{Disconnected case.} 
If $G_{min}$ is disconnected, then the graph can be split into two separate graphs $G_{min1}$, $G_{min2}$ with respectively $n_1$ and $n_2$ vertices. For these two graphs, let us consider the associated matrices $A_{min1}$ and $A_{min2}$ (with $n_1$ and $n_2$ rows respectively). We will call $Prog_{A_{min1}}$ and $Prog_{A_{min2}}$ \emph{subprograms} of $Prog_{A_{min}}$.
\newline
Now define $q_1$ and $q_2$ the subvectors of $q$ corresponding to the graphs $G_{min1}$ and $G_{min2}$ respectively. That is, the entries of $q_1$ are the same as the entries of $q$ corresponding to columns from $Prog_{A_{min1}}$, and the entries of $q_2$ are the same as the entries of $q$ corresponding to columns from $Prog_{A_{min2}}$. As $q$ is optimal, we have that $A_{min1} q_1 \geq k_A e_{n_1 \times 1}$ and $A_{min2} q_2 \geq k_A e_{n_2\times 1}$ (these inequalities and the following ones are entrywise).
\newline
Then define $w_1=1/(eq_1),$ $w_2=1/(eq_2),$ the inverse of the weight associated with each subprogram in the strategy $q$. 
\newline
We have that $q_1w_1$ and $q_2w_2$ are feasible solutions for $Prog_{A_{min1}}$ and for $Prog_{A_{min2}}$ respectively, so $ A_{min1} q_1 w_1\geq w_1k_Ae_{n_1 \times 1}$ and $ A_{min2} q_2 w_2 \geq w_2k_Ae_{n_2 \times 1}$. Therefore, $k_{A_{min1}}\geq w_1k_A$ and $k_{A_{min2}}\geq w_2k_A$. 
\newline
By the induction hypothesis, for $Prog_{A_{min1}}$ and for $Prog_{A_{min2}}$, there are vectors $r_1$ and $r_2$ which are optimal solutions and have a support associated with a graph composed of singletons, pairs and odd cycles. Therefore for these vectors we have
\begin{eqnarray} \nonumber
 A_{min1}r_1\geq k_{A_{min1}}e_{n_1 \times 1} \geq w_1k_Ae_{n_1 \times 1}\\\nonumber
A_{min2}r_2\geq k_{A_{min2}} e_{n_2 \times 1}\geq w_2k_Ae_{n_2 \times 1}.
\end{eqnarray}
Let us now extend $r_1$ and $r_2$ to vectors $q_{c1}$ and $q_{c2}$ of the same length as $q$, with the entries corresponding to the related subprogram equal to the ones of $r_1$ and $r_2$, and the other entries equal to zero. For these strategies, we have that $Aq_1'\geq w_1k_Ae$ and $Aq_2'\geq w_2k_Ae$ from the argument above. Now defining $q_c=q_{c1}/w_1+q_{c2}/w_2,$ we have $e^Tq_c=1$ and $Aq_c\geq k_A,$ which implies that $q_c$ is an admissible optimal solution for $Prog_A$.
\newline
Since the union of two graphs composed of disjoint singletons, pairs and odd cycles is also composed of disjoint singletons, pairs and odd cycles, the graph \rev{$G_c$} associated with the support of the strategy $q_c$ is composed of disjoint singletons, pairs and odd cycles, as wanted.

\textbf{Connected case.}
For the connected case, if $G_{min}$ is connected and $n\geq 3$, we can either show that $G_{min}$ has no vertex of degree one, or build an optimal solution such that its support is disconnected, which leads to the previous case.
\newline
Indeed, consider an edge $(v_1,v_2)$ with the vertex $v_2$ of degree one.  If $v_1$ is of degree one, then $(v_1,v_2)$ is a disconnected pair and $G_{min}$ is not connected. If $v_1$ has \rev{other} adjacent edges $(v_1,v_i)$, with $i>2$, one can obtain a new solution in which the value corresponding to \rev{these other edges} is equal to zero, without changing $k_A$ and without adding value to edges that are not in the support of the current solution.

\rev{For the description of this construction, let us denote by $q_{v_\alpha}$ the entry of $q$ corresponding to the vertex $v_\alpha$, and by $q_{(v_\beta,v_\gamma)}$ the entry of $q$ corresponding to the edge $(v_\beta, v_\gamma)$, with $1\leq \alpha \leq n$ and $1\leq \beta \neq \gamma\leq n$. From the solution $q$, we build a vector $q'$ of the same dimension as follows:
\begin{align}
\label{defq}
q'_{v_1}= q'_{v_2}&=0,&\\\nonumber
q'_{(v_1,v_2)}&=q_{v_1}+q_{v_2}+q_{(v_1,v_2)},&\\\nonumber
q'_{(v_1,v_i)}&=0& \text{for } i>2 \text{ and } (v_1,v_i)\in G_{min},\\\nonumber
q'_{v_i}&=q_{(v_1,v_i)}+q_{v_i}& \text{for } i>2 \text{ and } (v_1,v_i)\in G_{min},\nonumber
\end{align}
and all the other entries of $q'$ equal to the entries of $q$.

With this construction, we have the following inequalities: $$q'_{v_1}+\sum_{j\neq 1,\text{ } (v_1,v_j)\in G_{min}} q'_{(v_1,v_j)}=q_{v_1}+q_{v_2}+q_{(v_1,v_2)}\geq k_A,$$
$$q'_{v_2}+\sum_{j\neq 2,\text{ } (v_2,v_j)\in G_{min}} q'_{(v_2,v_j)}=q_{v_1}+q_{v_2}+q_{(v_1,v_2)}\geq k_A,$$ as $q_{v_2}+q_{(v_1,v_2)}\geq k_A$, because $q$ is an optimal solution. Moreover, for every other $i>2$, $$q'_{v_i}+\sum_{j\neq i, \text{ } (v_i,v_j)\in G_{min}} q'_{(v_i,v_j)}=q_{v_i}+\sum_{j\neq i,\text{ } (v_i,v_j)\in G_{min}} q_{(v_i,v_j)}\geq k_A.$$ 
In addition, it is straightforward from (\ref{defq}) that $\sum_{i=1}^{\rev{m(t)}} q_i'=\sum_{i=1}^{\rev{m(t)}} q_i=1 $. Therefore $q'$ and $k_A$ satisfy the conditions of Program (\ref{eqn-catchingprobalindual}), and $q'$ is an optimal strategy.
\newline
The value associated with the edges $(v_1,v_i)$, $i>2$, in $q'$ is $0$, so these edges are not in the support of $q'$, and the pair $(v_1,v_2)$ is disconnected. 
Therefore either $G_{min}$ has no vertex of degree one, or we can build a solution with a disconnected support.}
\newline
Now, notice that a connected graph with not more than $n$ edges and no vertex of degree one cannot have vertices of degree larger than two (because the sum of the degrees is equal to twice the number of edges). Therefore, all the vertices are in the same cycle. \rev{If $n$ is odd, we have our result. If $n$ is even, let us label its vertices $v_1, \dots, v_n$ in the cyclic order and define the vector $q'$ of the same dimension as $q$ as follows:
\begin{align}
\label{defqbis}\nonumber
q'_{v_i}&=0,& \text{for }1\leq i \leq n,\\\nonumber
q'_{(v_{2j-1},v_{2j})}&=q_{v_{2j-1}}+q_{v_{2j}}
+q_{(v_{2j-1},v_{2j})}+q_{(v_{2j},v_{2j+1})}& \text{for }1\leq j \leq n/2-1,\\\nonumber
q'_{(v_{n-1},v_n)}&=q_{v_{n-1}}+q_{v_n}+q_{(v_{n-1},v_n)}
+q_{(v_{n},v_{1})},& \\\nonumber
q'_{(v_{2l},v_{2l+1})}&=0& \text{for } 1\leq l \leq n/2-1,\\\nonumber
q'_{(v_{n},v_1)}&=0.&
\end{align}
We can verify that $q'$ is a feasible solution reaching the objective value $k_A$. Since the edges $(v_{2l},v_{2l+1})$ for $1\leq l \leq n/2$ and the edge $(v_{n},v_{1})$ are not in its support, it leads again to a disconnected case, and the proof is done}.
\end{proof}

\begin{example}
\label{exampleThm}
The purpose of this example is to illustrate the steps of Lemma \ref{thm-npluss} in the disconnected case. 

We start from Example \ref{A3}. We have
\rev{ $$A(3)=\left( \begin{array}{ccccccc}
1 & 0 & 0 & 0 &  1 & 0 & 0\\[-0.05cm]
0 & 1 & 0 & 0 &  0 & 0 & 1\\[-0.05cm]
0 & 0 & 1 & 0 &  0 & 1 & 1\\[-0.05cm]
0 & 0 & 0 & 1 &  1 & 1 & 0\end{array} \right).$$}
A possible minimal subset of columns of $A(3)$ is the following: 
\rev{$$A_{min}=\left( \begin{array}{cc}
 1 & 0\\[-0.05cm]
 0 & 1\\[-0.05cm]
0 & 1\\[-0.05cm]
1 & 0 \end{array} \right),$$}
which is the support of the solution of the original program $q=(0, 0, 0, 0, 1/2, 0, 1/2)^T$.

The associated graph $G_{min}$ is represented in Fig. \ref{ExampleG'}. It is composed of two pairs. It already is composed of singletons, pairs, and odd cycles, however we will continue the analysis to illustrate the ideas.

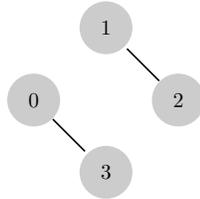
\begin{figure}[h!]
\begin{center}
\scalebox{1}{
\begin{tikzpicture}[-,>=stealth',shorten >=1pt,auto,node distance=1.7cm,
                    semithick]
  \tikzstyle{every state}=[fill=light-gray,draw=none,text=black, scale=0.8]

  \node[state] (A)                    {$0$};
  \node[state]         (B) [above right of=A] {$1$};
  \node[state]         (D) [below right of=A] {$3$};
  \node[state]         (C) [below right of=B] {$2$};

  \path (A) edge              node {} (D)
        (C) edge              node {} (B);
\end{tikzpicture}}
\end{center}
\caption{Graph $G_{min}$ associated with $A_{min}$.}
\label{ExampleG'}
\end{figure}

As the graph is disconnected, we separate it into $G_{min1}$ and $G_{min2}$, being the subgraphs induced by respectively $\{0,3\}$ and $\{1,2\}$.
This gives us the incidence matrices $A_{min1}=(1, 1)^T$ and $A_{min2}=(1, 1)^T$. We now consider the programs $Prog_{A_{min1}}$ and $Prog_{A_{min2}}$. Optimal solutions of these subprograms are $r_1=(1)$ and $r_2=(1)$, which each have a support forming a single pair (in a general setting we have to use the induction argument to prove the existence of solutions with support associated to a graph composed of singletons, pairs and odd cycles).

From the strategy $q$, we obtain $w_1=2$ for $Prog_{A_{min1}}$ and $w_2=2$ for $Prog_{A_{min2}}$. Expanding these solutions gives us $q_{c1}=(1,0)^T$, $q_{c2}=(0,1)^T$. This leads to the solution of the original program $q_c=q_{c1}/w_1+ q_{c2}/w_2= (1/2, 1/2)^T$, which has a support composed of two pairs.
\demo
\end{example}

It turns out that, given a set $A(t)$, the knowledge of a subset of its columns which is the support of an optimal solution of Program (\ref{eqn-catchingprobalindual}) and is composed of disjoint odd cycles, pairs and singletons allows to easily compute the SPF $k(t)$. Moreover, it provides with an upper bound on the dimension of the set of solutions $P_t$:

\begin{lemma}
\label{LemmaDim}
If the graph $G(t)$ associated with $A(t)$ is composed of disjoint odd cycles, pairs and singletons, then the optimum of Program (\ref{eqn-catchingprobalindual}) is given by $2/(n+n_1),$ where $n_1$ is the number of singletons.
Moreover, the dimension of $P_t$ is the number of pairs.
\end{lemma}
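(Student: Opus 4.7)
The plan is to exhibit explicit primal and dual optimal solutions, deduce the value $k = 2/(n+n_1)$ from weak duality, and then read off the dimension of $P_t$ by applying complementary slackness (Theorem \ref{THM2}) against this dual optimum.

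First I would set $k := 2/(n+n_1)$ and define a dual candidate $q^*$ that places weight $k$ on the singleton vertex column at each singleton, weight $k$ on each pair's edge column, weight $k/2$ on each edge of every odd cycle, and zero elsewhere. The matching primal candidate $p^*$ assigns $k$ to each singleton vertex and $k/2$ to every other vertex. Writing $n_2$ for the number of pairs and $n_3$ for the total number of odd-cycle vertices, the identity $n_1 + n_2 + n_3/2 = (n+n_1)/2$ simultaneously certifies the normalizations $eq^{*T}=ep^{*T}=1$ and the inequalities $Aq^* \geq ke^T$ and $p^*A \leq ke$: each vertex, singleton or not, receives exactly $k$ from the support columns incident to it, while the weight-one columns at non-singleton vertices give strict slack $k/2 < k$. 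Weak LP duality then forces both $p^*$ and $q^*$ to be optimal, which gives the first assertion.

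For the dimension of $P_t$, I would invoke Theorem \ref{THM2} with the $q^*$ above: its support imposes $p_v = k$ at every singleton, $p_{v_i} + p_{v_j} = k$ at every pair edge, and $p_{v_i} + p_{v_j} = k$ along every edge of every odd cycle. Walking around an odd cycle of length $m$, the recurrence $p_{v_{\ell+1}} = k - p_{v_\ell}$ closes on itself, and because $m$ is odd this has the unique solution $p_v = k/2$; each pair, by contrast, leaves a single free parameter $p_{v_i} \in [0,k]$ with $p_{v_j} = k - p_{v_i}$. Conversely, every such $p$ lies in $P_t$: the normalization $\sum p_v = 1$ is automatic from the same bookkeeping as for $q^*$, and the only primal inequalities left to check are $(pA)_j \leq k$ at columns outside the support of $q^*$, which are exactly the weight-one columns $e_v$ at non-singleton vertices; these reduce to $p_v \leq k$, already guaranteed. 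Hence $P_t$ is affinely isomorphic to $[0,k]^{n_2}$, of dimension $n_2$.

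The one delicate point is the parity argument for cycles: if a cycle were \emph{even}, the alternating recurrence $p_{v_{\ell+1}} = k - p_{v_\ell}$ would be consistent for every initial value, producing one extra free parameter per even cycle and breaking the clean count. This is precisely the pathology that the odd-cycle restriction in Lemma \ref{thm-npluss} was designed to rule out, which is why that structural result is an essential prerequisite to the present dimension count.
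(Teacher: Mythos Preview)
Your proposal is correct and follows essentially the same approach as the paper: the primal/dual pair you construct is identical to the paper's (with $k=2/K$, $K=n+n_1$), and the verification of feasibility and normalization is the same bookkeeping. The only minor presentational difference is that the paper simply writes down the polytope $P_t$ explicitly and asserts that it is the full optimal set, whereas you derive the equality constraints on $P_t$ from Theorem~\ref{THM2} against your $q^*$ and then argue the parity point for odd cycles; this makes the containment ``every optimal $p$ lies in the described set'' more explicit than in the paper, but it is the same argument in substance.
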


\begin{proof}
To make notations concise, we define $K=n+n_1.$  Our claim is that $k(t)=2/K.$ We provide an admissible solution for the primal Program (\ref{eqn-catchingprobalin}), as well as for the dual Program (\ref{eqn-catchingprobalindual}), with the same objective value. Therefore this value is optimal. 

A solution of Program (\ref{eqn-catchingprobalin}) can be built as follows:
\begin{equation}
\label{S1}
p_i=
\left\lbrace
\begin{array}{ll}
2/K  & \mbox{if state $i$ corresponds to a singleton vertex,}\\
1/K & \mbox{otherwise.} 
\end{array}\right.
\end{equation}
The sum of the coefficients is $\sum_{i=1}^np_i=(n-n_1)/K+2n_1/K=1,$ and $p$ is a feasible solution for Program (\ref{eqn-catchingprobalin}) with an objective value of $2/K$.

For Program (\ref{eqn-catchingprobalindual}), a solution can be built as follows:
\begin{equation}
\label{S2}
q_i=
\left\lbrace
\begin{array}{ll}
2/K  & \mbox{if column $i$ corresponds to a pair,}\\
1/K  & \mbox{if column $i$ corresponds to an edge of an odd cycle,}\\
2/K & \mbox{if column $i$ corresponds to a singleton.} 
\end{array}\right.
\end{equation}
The sum of the coefficients is $\sum_{i=1}^{m(t)}q_i=n_1\times 2/K+(n-n_1) \times 1/K=1,$ and $q$ is a feasible solution for Program (\ref{eqn-catchingprobalindual}) with objective value of $2/K$.
Summarizing, Equation (\ref{S1}) describes a solution for Program (\ref{eqn-catchingprobalin}), and Equation (\ref{S2}) describes a solution for Program (\ref{eqn-catchingprobalindual}), achieving the same objective value. As the programs are dual, this implies that both strategies are optimal, and $k(t)=2/K.$

We can now give an explicit expression for $P_t$.
Reordering the vertices such that the first $f$ indices correspond to singletons, the next $g$ indices correspond to vertices in pairs, grouped by pair (two vertices in the same pair have indices $f+2j-1$ and $f+2j$, $1\leq j \leq g/2$), and the last $h$ indices correspond to vertices in odd cycles, we have the following set of optimal solutions for Program (\ref{eqn-catchingprobalin}):
\begin{equation}
\label{polytope}
P_t=
\left\lbrace\begin{array}{l|ll}
& p_i=2/K ,& \text{with } 1\leq i \leq f,\\ 
 & p_{f+2j-1}= 1/K+x_j, & \\ 
 (p_1, ..., p_{f+g+h} )& p_{f+2j}=1/K-x_j, &\text{with }1\leq j \leq g/2 \text{ and } \\
 && -1/K\leq x_j \leq 1/K,\\ 
 & p_k=1/K, &  \text{with } f+g+1\leq k \leq f+g+h 
\end{array}\right\rbrace.
\end{equation}
One can check that the vectors $p$ described in (\ref{polytope}) are feasible solutions for Program (\ref{eqn-catchingprobalin}). Each element of the set $P_t$ is uniquely defined by $g/2$ independent parameters $x_j$ that can vary between $-1/K$ and $1/K$, so the dimension of $P_t$ is $g/2$.
\end{proof}

We now combine these lemmas to obtain a universal upper bound on the triple rendezvous time for synchronizing automata. The main steps of the reasoning are as follows:
\rev{starting from any original program obtained from an automaton and a value $t$, due to Lemma \ref{thm-npluss}, one can find a subset of columns $A_c(t)$ of $A(t)$ such that $Prog_{A_c(t)}$ has the same objective value as $Prog_{A(t)}$ and such that the associated graph satisfies the conditions of Lemma \ref{LemmaDim}. For this program, we can easily compute the value of the SPF and an upper bound on the dimension of $P_t$. Then, using dimensional arguments, we obtain a lower bound on the SPF growing rate with respect to $t$, for $t<T_3$:}

\begin{theorem} 
\label{cor-stagnate}
If $t<T_3$, then $k(t)$ can only take the values $2/(n+s),$ $0\leq s  \leq n-1$, and this value cannot be optimal at more than $\lfloor (n-s)/2 \rfloor+1$ consecutive values of $t$.
\end{theorem}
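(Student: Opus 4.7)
I plan to combine Lemma~\ref{thm-npluss}, Lemma~\ref{LemmaDim}, and Lemma~\ref{inclusion} to pin down both the possible values of the SPF and the length of any plateau.

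For the first claim, for $t<T_3$ every column of $A(t)$ has weight one or two, so Lemma~\ref{thm-npluss} extracts a column-subset $A_c(t)\subseteq A(t)$ whose support graph $G_c(t)$ is a disjoint union of singletons, pairs, and odd cycles, and such that $Prog_{A_c(t)}$ and $Prog_{A(t)}$ share the same optimum. Lemma~\ref{LemmaDim} then gives $k(t)=2/(n+s)$, where $s$ is the number of singletons in $G_c(t)$, an integer in $\{0,\dots,n-1\}$ once the trivial case $t=0$ is set aside. Moreover, since $A_c(t)$ is a subset of the columns of $A(t)$, the polytope $P_t$ is cut out by more primal constraints than $P_{A_c(t)}$, giving $P_t\subseteq P_{A_c(t)}$ and $\dim P_t\leq \dim P_{A_c(t)}=\pi(t)\leq \lfloor(n-s)/2\rfloor$, where $\pi(t)$ is the number of pairs in $G_c(t)$ (each pair consuming two of the $n-s$ non-singleton vertices).

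Assume now that $k$ equals $2/(n+s)$ on a window $[t_0,t_0+L-1]\subseteq[0,T_3)$. Lemma~\ref{inclusion} yields the nested chain $P_{t_0+L-1}\subseteq\cdots\subseteq P_{t_0}$. I would argue that $\dim P_{t+1}<\dim P_t$ at every step: because $t+1<T_3$ is strictly below the reset threshold, $A(t+1)\supsetneq A(t)$, so at least one new edge-column $v=(e_{v_a}+e_{v_b})^T$ appears, contributing the primal constraint $p_{v_a}+p_{v_b}\leq k$. Using the explicit parametrization~(\ref{polytope}) of $P_{A_c(t)}$, this constraint becomes a linear inequality on the free parameters $x_j$. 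A case analysis on the roles of $v_a,v_b$ in $G_c(t)$, together with the constraints already accumulated from $A(t)\setminus A_c(t)$, then shows that the new inequality either pins some free $x_j$ at an extreme value (when one of $v_a,v_b$ is a singleton) or combines with a previously recorded opposite inequality to become an equality (when both lie in pairs or odd cycles), lowering the affine dimension by exactly one. Chaining these drops from $\dim P_{t_0}\leq \lfloor(n-s)/2\rfloor$ down to $\dim P_{t_0+L-1}\geq 0$ yields $L\leq \lfloor(n-s)/2\rfloor+1$.

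The crux is this last step: a single new half-space constraint added to a polytope need not lower its affine dimension in general. The case analysis must exploit both the combinatorial origin of $v$ (as a column of a word $W'L$, i.e.\ as a sum of columns of $W'\in\Sigma^{\leq t}$ indexed by the preimage of some state under the letter $L$) and the hypothesis that $k$ does not grow, to rule out the degenerate scenario in which the new inequality merely trims a face of $P_t$ without cutting its affine hull. I expect the bulk of the technical work to lie in making this per-step dimension drop watertight.
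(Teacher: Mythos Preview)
Your overall architecture---pin down the value $k(t)=2/(n+s)$ via Lemmas~\ref{thm-npluss} and~\ref{LemmaDim}, bound the initial dimension by $\lfloor(n-s)/2\rfloor$, and then force a strict dimension drop at every step of a plateau---is exactly the paper's strategy. The divergence is in the \emph{mechanism} of the dimension drop, and that is where your proposal has a real gap.

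You track $P_t$ and propose a case analysis on where the endpoints of a new weight-two column sit relative to $G_c(t)$, hoping the new \emph{inequality} $p_{v_a}+p_{v_b}\le k$ shaves one affine dimension. As you yourself flag, an inequality added to a polytope need not touch its affine hull, and your sketch does not explain how the ``previously recorded opposite inequality'' is guaranteed to be present in $P_t$ (it is present in $P_{A_c(t)}$ by the parametrization~(\ref{polytope}), but $P_t$ is already cut further by the columns of $A(t)\setminus A_c(t)$, whose effect you have not controlled). In particular, when $\dim P_t$ is already zero your mechanism cannot produce a contradiction, yet the theorem still needs $k$ to increase at the next step.

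The paper sidesteps this entirely. It tracks the larger polytope $R_t:=P_{A_c(t)}$ (whose dimension is \emph{exactly} the number of pairs) and argues as follows. Since the dual optimum $q$ of Lemma~\ref{LemmaDim} puts positive weight on every column of $A_c(t)$, multiplying by any letter $M$ gives $M A_c(t)\,q\ge k e$, so every column of $A'(t+1)=A_c(t)\cup\Sigma A_c(t)$ is \emph{critical}. If no genuinely new critical column appeared one would have $M^T R_t\subseteq R_t$ for all $M\in\Sigma$, contradicting synchronization. Hence there is some $Ma_j\notin A_c(t)$ which is critical, and by complementary slackness (Theorem~\ref{THM2}) it imposes the \emph{equality} $p^T Ma_j=k$ on the optimal set at time $t{+}1$; this hyperplane does not contain all of $R_t$, so the dimension strictly drops. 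No case analysis on the graph is needed: criticality plus Theorem~\ref{THM2} turns the new constraint into a hyperplane automatically, and the synchronizing hypothesis guarantees the hyperplane is proper. That is the idea you are missing.
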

\begin{proof}
Let us fix $t<T_3$. By Lemma \ref{thm-npluss}, one can find a matrix $A_c$ which is a subset of the columns of $A(t)$, such that the associated graph $G_{c}$ is composed of singletons, pairs and odd cycles, and such that $k_A(t)=k_{A_c}$. Let $R_t$ be the set of optimal solutions of $Prog_{A_{c}}$. From Lemma \ref{LemmaDim}, the dimension of $R_t$ is the number of pairs in $G_{c}$. Let $s$ be the number of singletons. There are $n-s$ vertices of $G_{c}$ which are either in pairs or in odd cycles, so there are at most $(n-s)/2$ pairs. The set of optimal solutions $R_t$ of $Prog_{A_{c}}$ has a dimension not smaller than the set of optimal solutions $P_t$ of $Prog_{A}$, because optimal solutions of $Prog_{A}$ are also optimal solutions of $Prog_{A_{c}}$. Therefore we have that $\text{dim}(P_t)\leq \text{dim}(R_t)\leq(n-s)/2$. 

We claim that, if $k(t+1)=k(t)$, then $\text{dim}(R_{t+1})<\text{dim}(R_t)$.
Equation (\ref{S2}) gives the set of optimal solutions, in which a positive value is assigned to each of the columns of $A_c(t)$, so it is a set of critical columns. 
We will use a similar argument as in the proof of \cite[Theorem 3]{jungers_sync_12} to obtain that $\text{dim}(R_{t+1})<\text{dim}(R_t)$. 

Hence we suppose that $k(t+1)=k(t)$. First recall that, from Lemma \ref{inclusion}, $$P_{t+1}\subset P_t.$$
We define $$A'(t+1)=A_c(t)\cup\{MA_c(t):M\in \Sigma \}$$
and $P'_{t+1}$ as the set of solutions $p\geq 0, e^T p=1$, such that $p^TA'(t+1)=ke$. Note that $P_{t+1}\subset P_{t+1}'$, since the columns in $A'(t+1)$ are a subset of the critical columns in $A(t+1)$. Also, $P'_{t+1}\subset R(t)$ since the columns in $A_c(t)$ are a subset of columns in $A'(t+1)$.

We first show that $P_{t+1}'\neq R_t$. Indeed, since $A'(t+1)$ contains all the columns of $A_c(t)$ multiplied by a matrix in $\Sigma$, it is clear that 
$$\forall M\in \Sigma, \forall p \in P'_{t+1}, M^Tp\in R_t.$$
Supposing $P_{t+1}'= R_t$, the above equation implies that $B^TR_t\subset R_t$ for all $B\in \Sigma^s, s\geq 1$, which implies that $\Sigma$ is not synchronizing. Indeed, this implies that for all $p \in R_t$, for all $B\in \Sigma^s, p^TB\leq ke^T.$

So, there must be a matrix $M\in \Sigma$ and a column $a_j$ of $A_c(t)$ such that $Ma_j\notin A_c(t)$. Again, since $MA_c(t)q\geq ke, Ma_j$ is a new critical column.

Now, by Theorem \ref{THM2}, the new critical column $Ma_j$ is such that $p^TMa_j=k$ for all $p\in P_{t+1}'$. Let $H$ be the hyperplane represented by this constraint. Since $R_t \cap H \neq R_t$, and $R_{t+1} \subset R_t\cap H$, it follows that $\text{dim}(R_{t+1})<\text{dim}(R_t)$, which proves the claim.

Therefore, as $\text{dim}(R_t)\leq \lfloor(n-s)/2\rfloor$, if $k(t)=k(t+\lfloor(n-s)/2\rfloor+1)$, the dimension of $(R_{t+\lfloor(n-s)/2\rfloor+1})$ would have to be negative, which is impossible. This implies that  $k(t+\lfloor(n-s)/2\rfloor+1)>k(t)$.
\end{proof}

With Theorem \ref{cor-stagnate}, we can now obtain the bound:

\begin{theorem}
In a synchronizing automaton $\Sigma$ with $n$ states, $$T_{3,\Sigma}\leq \frac{n(n+4)}{4} - \frac{n \text{ mod } 2}{4}.$$
\label{FirstT3bound}
\end{theorem}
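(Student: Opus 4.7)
The plan is to bound $T_3$ by summing, over all admissible values of $s$, the maximum length of the plateau of $k(t)$ at value $2/(n+s)$ that is possible while $t<T_3$. All the heavy lifting has already been done in Theorem \ref{cor-stagnate}; what remains is essentially a counting argument.

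First I would observe that, by construction, $k(t)$ is non-decreasing in $t$. By Theorem \ref{cor-stagnate}, for every $t<T_3$ the value $k(t)$ belongs to the finite set $\{2/(n+s):0\le s\le n-1\}$, which is totally ordered. Monotonicity of $k$ therefore forces each value $2/(n+s)$ to be attained on a (possibly empty) interval of consecutive integers. Still by Theorem \ref{cor-stagnate}, the length of this interval is at most $\lfloor(n-s)/2\rfloor+1$. Consequently,
\begin{equation*}
T_{3,\Sigma}\ \le\ \sum_{s=0}^{n-1}\Bigl(\bigl\lfloor(n-s)/2\bigr\rfloor+1\Bigr)\ =\ n+\sum_{j=1}^{n}\lfloor j/2\rfloor,
\end{equation*}
after reindexing with $j=n-s$.

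The remaining step is to evaluate the sum $\sum_{j=1}^{n}\lfloor j/2\rfloor$, which splits naturally according to the parity of $n$. When $n$ is even, pairing consecutive terms yields $\sum_{j=1}^n\lfloor j/2\rfloor=\sum_{k=1}^{n/2}(2k-1)=n^2/4$, giving $T_{3,\Sigma}\le n+n^2/4=n(n+4)/4$. When $n$ is odd, the same pairing together with the leftover last term yields $(n^2-1)/4$, so $T_{3,\Sigma}\le n+(n^2-1)/4=n(n+4)/4-1/4$. Combining the two cases gives exactly the bound $n(n+4)/4-(n\bmod 2)/4$ claimed in the statement.

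I do not expect a real obstacle here: the whole argument is a packaging of Theorem \ref{cor-stagnate} with an arithmetic sum, and the only subtlety is being careful that $k$ being non-decreasing and lying in a totally ordered finite set implies each value is attained on a contiguous block of $t$'s, so that the per-value bounds can be added without overlap.
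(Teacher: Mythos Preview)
Your proposal is correct and follows essentially the same approach as the paper: both invoke Theorem~\ref{cor-stagnate} to bound each plateau of $k(t)$ for $t<T_3$ and then sum the bounds $\lfloor (n-s)/2\rfloor+1$ over $s=0,\dots,n-1$. You simply add a little more detail than the paper, namely the explicit monotonicity justification for contiguous plateaus and a parity-by-parity evaluation of the sum.
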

\begin{proof}
By Theorem \ref{cor-stagnate}, if $t<T_3$, $k(t)$ can only take the values $2/(n+s),$ with $0\leq s \leq n-1$. Moreover, if $k(t)=2/(n+s)$, then $k(t+\lfloor(n-s)/2\rfloor+1)>k(t)$.  Summing over all possible values for $k(t)$ if $t<T_3$, one gets
\begin{equation}
\label{firstbound}
\sum_{s=0}^{n-1} \left(\lfloor(n-s)/2\rfloor+1 \right)=\sum_{s=1}^{n}\left(  \lfloor s/2\rfloor+1 \right) 
= \frac{n(n+4)}{4} - \frac{n \text{ mod } 2}{4}. 
\end{equation} 
\end{proof}

We now present another technique that provides with an alternative upper bound on the triple rendezvous time. It is based on the observation that, if the synchronizing probability function at $t=n$ is small, the \rev{TRT} is also small. Then, coupling \rev{this with Theorem \ref{cor-stagnate}}, we will obtain a further improvement of the bound.

\begin{lemma}
For  $1\leq s\leq n/2,$ if $k(n)<\frac{1}{n-s}$, then for any word $W\in \Sigma^{\leq n}$, there are less than $s$ zero entries in $eW$.
\label{SubLemmeComplet}
 \end{lemma}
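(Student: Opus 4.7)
\medskip
\noindent\textbf{Proof plan for Lemma \ref{SubLemmeComplet}.}

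The plan is to argue by contrapositive: given a word $W\in\Sigma^{\leq n}$ with at least $s$ zero entries in $eW$, I will build an explicit strategy for Player One in Program (\ref{eqn-catchingprobalindual}) whose value is at least $1/(n-s)$. Since $k(n)$ is the optimum of that program, this forces $k(n)\geq 1/(n-s)$, contradicting the hypothesis.

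The key observation is structural: since $W\in\{0,1\}^{n\times n}$ is row-stochastic, row $i$ has exactly one non-zero entry, in column $\delta_W(i)$ where $\delta_W$ is the state map induced by $W$. Consequently, column $j$ of $W$ is the indicator vector of the fiber $\delta_W^{-1}(j)$; it is zero iff $j\notin\mathrm{Im}(\delta_W)$, and otherwise it is non-zero. Two distinct non-zero columns correspond to disjoint non-empty fibers, so they are \emph{pairwise distinct} vectors in $\{0,1\}^n$. Moreover, since the union of the fibers is all of $\{1,\ldots,n\}$, summing all columns of $W$ yields $We^T=e^T$, the all-ones column.

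With this in hand, let $J=\{j:(eW)_j>0\}=\mathrm{Im}(\delta_W)$. The hypothesis gives $|J|\leq n-s$. Each non-zero column of $W$ is reachable at length $\leq n$, hence is a column of $A(n)$, and these $|J|$ columns are distinct by the observation above. Define the strategy $q$ for Player One by assigning weight $1/|J|$ to each of these $|J|$ columns of $A(n)$ and $0$ to the others; clearly $eq=1$ and $q\geq 0$. Then
$$
Aq \;=\; \frac{1}{|J|}\sum_{j\in J} c_j \;=\; \frac{1}{|J|}\,e^T,
$$
where $c_j$ denotes column $j$ of $W$. Every entry of $Aq$ therefore equals $1/|J|\geq 1/(n-s)$, so $q$ is feasible for Program (\ref{eqn-catchingprobalindual}) with objective value at least $1/(n-s)$. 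Hence $k(n)\geq 1/(n-s)$, contradicting the hypothesis $k(n)<1/(n-s)$.

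The argument is short because the hard work — identifying that reachable non-zero columns of a word are disjoint indicator vectors whose sum is $e^T$ — is purely combinatorial and immediate from row-stochasticity; the only subtlety to flag is that $q$ is defined on $A(n)$ as a whole (padded with zeros on the columns not arising from $W$), so that feasibility of the dual program is literal rather than up to reshuffling. No deeper use of the polytope structure $P_t$ or of Lemma \ref{thm-npluss} is needed here.
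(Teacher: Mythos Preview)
Your proof is correct. Both you and the paper argue by contrapositive, but you take the dual route while the paper takes the primal one. The paper picks an optimal primal solution $p$ for Program~(\ref{eqn-catchingprobalin}) and observes that $pW$ is a probability vector supported on at most $n-s$ coordinates (since entrywise $pW\leq eW$), so some entry of $pW$ is at least $1/(n-s)$; because $k(n)=\max_{W,i}(pW)e_i^T$ at the optimal $p$, this yields $k(n)\geq 1/(n-s)$ in one line. Your argument instead builds an explicit feasible $q$ for Program~(\ref{eqn-catchingprobalindual}) from the non-zero columns of $W$, using that these columns are indicators of the (disjoint, non-empty) fibers of $\delta_W$ and hence sum to $e^T$. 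The two arguments are of comparable length and are, in a precise sense, dual to each other: the paper's version avoids checking distinctness of columns, while yours is more constructive in that it exhibits the witnessing strategy for Player One.
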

\begin{proof} Suppose that there is a word $L\in \Sigma^{\leq n}$ such that $eL$ has at least $s$ zero entries, and therefore at most $n-s$ non-zero entries. Let $p$ be an optimal solution for Program (\ref{eqn-catchingprobalin}). Applying the word $L$ to the automaton, the final probability distribution on the states is $pL$. Entrywise, $pL\leq eL$, therefore $pL$ has at most $n-s$ non-zero entries. As the sum of the entries of $pL$ is equal to one, the average of its non-zero entries is at least $1/(n-s)$. Therefore, at least one of the entries is higher than or equal to $1/(n-s)$. As $p$ is an optimal solution, it implies that $k(n)\geq\frac{1}{n-s}$. Therefore we have the contrapositive: if $k(n)<\frac{1}{n-s}$, for any word $W\in \Sigma^{\leq n}$ there are less than $s$ zero entries in $eW$.
\end{proof}

Now, based on Lemma \ref{SubLemmeComplet}, we have the following result on the triple rendezvous time:

\begin{lemma}
For any strongly connected synchronizing automaton with $n$ states, and for any integer $1\leq s\leq n/2$, either
$$k(n)\geq \frac{1}{n-s} $$
or 
$T_3\leq n(s+2)/2. $
\label{LemmeComplet}
 \end{lemma}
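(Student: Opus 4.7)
The plan is to prove the contrapositive: assuming $k(n) < 1/(n-s)$, I would exhibit a word of length at most $n(s+2)/2$ with a column of weight at least three, thereby certifying $T_3 \leq n(s+2)/2$. The first step is to invoke Lemma \ref{SubLemmeComplet} to translate the hypothesis into combinatorial content: every $W \in \Sigma^{\leq n}$ satisfies $|\operatorname{Im}(W)| \geq n-s+1$. If some such $W$ already has a column of weight at least three we are done with $T_3 \leq n$; otherwise every length-$\leq n$ word has column weights in $\{0,1,2\}$, and hence at most $s-1$ weight-two columns.

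Next, I would study how the weight vector of a word evolves under extension. A key structural observation is that, as long as the extended word stays weight-two-bounded, every weight-two column of a prefix $W_1$ must be preserved (possibly at a new position) in any extension $W_1 W_2$: if $W_2$ merged the weight-two position with any other nonzero-weight position of $W_1$, the resulting column of $W_1 W_2$ would have weight at least three, contradicting the case hypothesis. Consequently the count $z(W)$ of weight-two columns is monotone non-decreasing along prefixes, and any new weight-two columns in $W_1 W_2$ relative to $W_1$ arise only from weight-two columns of $W_2$ that merge two weight-one positions of $W_1$. When $|W_2| \leq n$, Lemma \ref{SubLemmeComplet} limits this contribution to at most $s-1$ fresh columns per block.

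The construction then starts from a single letter $\sigma_0 \in \Sigma$ whose action already produces a weight-two column (such a letter exists because $T_2 = 1$ in any synchronizing automaton), and iteratively appends blocks $U_1, U_2, \ldots$ of length at most $n-1$, each chosen via the strongly connected hypothesis. At each iteration one of two outcomes is forced: either the extended word acquires a column of weight $\geq 3$ (and we stop within the current length), or a fresh weight-two column is introduced. A counting argument combining the per-block contribution bound from the second step with the guaranteed progress at each iteration should bound the number of blocks by roughly $(s+2)/2$, yielding total length $\leq n(s+2)/2$.

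I expect the main obstacle to be making this iterative step precise. The natural tool — using strong connectivity to find a word merging a chosen pair of states — does not immediately give words of length $\leq n-1$, since pair-synchronization times in strongly connected synchronizing automata are only bounded by the reset threshold, which can be quadratic in $n$. The successful argument will likely combine the reachability afforded by strong connectivity with a pigeonhole/counting argument over the $\geq n-s$ candidate weight-one positions available at each iteration, using the merge typology from the second step — the only admissible merges under the no-weight-three hypothesis are of types $(1,1), (2,0), (1,0), (0,0)$ — to force either a $(2,1)$-merge (creating weight three) or a new $(1,1)$-merge within each length-$\leq n-1$ block, and iterating at most $(s+2)/2$ times before one of these collisions is unavoidable.
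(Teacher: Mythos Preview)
Your setup is correct: under $k(n) < 1/(n-s)$, Lemma~\ref{SubLemmeComplet} gives fewer than $s$ zero columns in any word of length $\leq n$, and hence (when no weight-three column is present) at most $s-1$ weight-two columns. The monotonicity of the weight-two count along prefixes is also correct. But the iterative construction has two genuine gaps, both of which you flag yourself. First, you have not shown that each length-$(\leq n{-}1)$ block can be chosen to produce a fresh weight-two column (or a weight-three one): strong connectivity routes a \emph{single} state within $n-1$ steps, but merging two specified weight-one positions may require far longer, and no pigeonhole over the $\geq n-s$ weight-one positions forces an unspecified merge within that horizon. Second, even granting one fresh weight-two column per block, the claimed bound of $(s+2)/2$ blocks is unsupported: the per-block inequalities $1 \leq z_2(W_r) - z_2(W_{r-1}) \leq s-1$ impose no ceiling on $z_2(W_r)$ that would halt the iteration after $O(s)$ rounds.

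The paper's proof avoids both obstacles by moving the pigeonhole from a single growing word to the graph $G(t)$ whose edges record \emph{all} reachable weight-two columns of length $\leq t$. Since $G(t)$ gains at least one edge per unit of $t$ while $t < T_3$, by some $t_s \leq sn/2$ there is a vertex $q$ with at least $s$ neighbours, each witnessed by a word in $\Sigma^{\leq sn/2}$ merging that neighbour with $q$. Then a \emph{single} use of strong connectivity (together with $T_2 = 1$) yields a word $W_1$ of length $\leq n$ sending two distinct states to $q$; Lemma~\ref{SubLemmeComplet} guarantees that some neighbour $q_3$ of $q$ lies in the image of $W_1$ (at most $s-1$ states are missed and there are $\geq s$ neighbours); concatenating $W_1$ with the witness word $W_2 \in \Sigma^{\leq sn/2}$ merging $q$ and $q_3$ gives a word of length $\leq n + sn/2 = n(s+2)/2$ mapping three states to one. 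The idea you are missing is to accumulate weight-two information across \emph{all} short words via $G(t)$, rather than within a single word.
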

	\begin{proof} Let us fix $s$ as in the statement of Lemma \ref{SubLemmeComplet}. We suppose that $k(n)< \frac{1}{n-s}$, and our goal is to prove that $T_3\leq n(s+2)/2.$ Let $t_s$ be the minimal $t$ such that $G(t_s)$ has a vertex of degree larger than $s$, and let $v$ be such a vertex. As there is at least one more edge in $G(t+1)$ than in $G(t)$, and there are $n$ vertices, we have that $t_s\leq sn/2$. This means that there exists a state $q$ corresponding to $v$, $s$ states $q_i$ corresponding to the vertices connected to $v$ in $A(sn/2)$, and words $W_i\in \Sigma^{\leq sn/2}$, with $1\leq i\leq s$, such that $qW_i=q_iW_i$.

We will now build a word with a column of weight three. As the graph is strongly connected, starting with a letter having a column of weight two, there exists a word $W_1$ of length at most $n$ and two states $q_1$ and $q_2$ such that $q_1W_1=q_2W_1=q$.
From Lemma \ref{SubLemmeComplet}, there are less than $s$ zeros in any product of length $n$. So,  there exists one state $q_3$ such that its corresponding vertex is connected to $v$ in $G(sn/2)$, and such that its corresponding entry in $eW_1$ is at least one. Let us call $W_2$ the word of length at most $sn/2$ such that $qW_2=q_3W_2$.
The word $W_1 W_2$ is such that three states are mapped to a single state. Therefore either $T_3\leq sn/2+n $ or $k(n)\geq \frac{1}{n-s} $.
\end{proof}

Coupling Theorem \ref{cor-stagnate} and Lemma \ref{LemmeComplet}, we obtain the following upper bound:

\begin{proposition}
For any strongly connected synchronizing automaton with $n$ states, for any $1\leq s\leq n/2,$
\begin{equation}\label{eq-bound-T3-max} T_3 \leq
\max{\{n(s+2)/2 , (n(n+4)-(2s-1)(2s+3)+1)/4\}} \end{equation}
\end{proposition}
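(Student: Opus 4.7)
The plan is to combine Lemma~\ref{LemmeComplet} and Theorem~\ref{cor-stagnate} through a case split on the dichotomy provided by Lemma~\ref{LemmeComplet}. Fix an integer $s$ with $1 \leq s \leq n/2$. By Lemma~\ref{LemmeComplet}, either $T_3 \leq n(s+2)/2$, in which case the first argument of the maximum already bounds $T_3$ and the proposition is immediate, or $k(n) \geq 1/(n-s)$. I focus on the latter case, aiming to prove
\[
T_3 \leq \frac{n(n+4)-(2s-1)(2s+3)+1}{4}.
\]

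Assume $k(n) \geq 1/(n-s) = 2/(2n-2s)$. Since $k$ is non-decreasing, this inequality persists for every $t \geq n$. Theorem~\ref{cor-stagnate} restricts the values of $k(t)$ for $t < T_3$ to lie in $\{2/(n+s') : 0 \leq s' \leq n-1\}$, and tells us that each such value is optimal on at most $\lfloor (n-s')/2\rfloor + 1$ consecutive time steps. Combining the two forces $s' \leq n-2s$ for $t \in [n, T_3-1]$; the ``low'' levels $s' > n-2s$ can appear only for $t < n$.

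I would then proceed as in the proof of Theorem~\ref{FirstT3bound}, summing the per-level block-length bounds over the admissible range $s' \in \{0,1,\ldots,n-2s\}$, which contributes at most
\[
\sum_{s'=0}^{n-2s}\left(\left\lfloor \tfrac{n-s'}{2}\right\rfloor + 1\right),
\]
while the tail from $s' > n-2s$, confined to $t \in [1,n-1]$, is absorbed into this sum. A short algebraic manipulation, based on the substitution $j = n - s'$ (so that $j$ ranges over $[2s,n]$) and a parity split of $\sum_{j=2s}^{n}(\lfloor j/2\rfloor + 1)$, then yields the closed form
\[
\sum_{s'=0}^{n-2s}\left(\left\lfloor \tfrac{n-s'}{2}\right\rfloor + 1\right) \;=\; \frac{n(n+4) - (2s-1)(2s+3) + 1}{4}.
\]
Taking the maximum with the case-1 bound $n(s+2)/2$ then produces the stated inequality.

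The main obstacle is precisely the tail accounting described above: a completely naive estimate that allocates up to $n-1$ extra steps to the levels $s' > n-2s$ would give $T_3 \leq n + \sum_{s'=0}^{n-2s}(\lfloor(n-s')/2\rfloor+1)$, which is weaker than the target by an additive term of order $n$. Tightening this step is the delicate part: it requires using that the hypothesis $k(n) \geq 1/(n-s)$, together with the monotonicity of $k$, already constrains how much of the block-length budget the low-$k$ levels can consume before time $n$, so that the partial sum alone is enough.
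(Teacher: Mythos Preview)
Your approach is exactly the paper's: case-split via Lemma~\ref{LemmeComplet}, and in the case $k(n)\ge 1/(n-s)$ sum the per-level block-length bounds from Theorem~\ref{cor-stagnate} over the restricted range $r\in\{0,\dots,n-2s\}$. You are right to flag the tail accounting as the obstacle, and you should be aware that the paper's own proof does \emph{not} resolve it either: it simply writes ``summing over all possible values for $k(t)$'' and presents the partial sum $\sum_{r=0}^{n-2s}(\lfloor(n-r)/2\rfloor+1)$ as a bound on $T_3$, with no argument covering the time steps $t<n$ at which $k(t)$ may still sit at a level with $r>n-2s$.

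This is a genuine gap, not a presentational issue. The hypothesis $k(n)\ge 1/(n-s)$ constrains the levels only from time $n$ onward, so the partial sum bounds $T_3-t_0$, where $t_0\le n$ is the first time $k$ reaches the level $2/(2n-2s)$; it does not bound $T_3$ itself. Nothing in Theorem~\ref{cor-stagnate} or in the monotonicity of $k$ lets one ``absorb'' the first $t_0-1$ steps into the partial sum: for $s$ close to $n/2$ the partial sum reduces to a handful of terms and is strictly smaller than $n$, so absorption is impossible in general. The bound the argument actually delivers is
\[
T_3 \;\le\; n \;+\; \sum_{r=0}^{n-2s}\bigl(\lfloor(n-r)/2\rfloor+1\bigr),
\]
which is weaker than the stated proposition by an additive $n$. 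This extra $n$ is harmless for the leading-order constant in Theorem~\ref{HighT3bound}, but it does mean that neither your sketch nor the paper's proof establishes the proposition exactly as written; the ``delicate tightening'' you allude to is not supplied anywhere.
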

\begin{proof}
If $$k(n)< \frac{1}{n-s},$$ we apply Lemma \ref{LemmeComplet}.

Otherwise, from Theorem \ref{cor-stagnate}, the values that can be taken by the synchronizing probability function for $t$ between $n$ and $T_3$ are of the shape $$k(t)=\frac{2}{n+r},$$ with $0\leq r \leq n-2s,$ and if $k(t)=2/(n+r)$, then $k(t+\lfloor(n-r)/2\rfloor+1)>k(t)$.

Summing over all possible values for $k(t)$, one gets
\begin{eqnarray} \nonumber \sum_{r=0}^{n-2s} \lfloor(n-r)/2\rfloor +1 &=&\sum_{r=2s}^{n}\lfloor(r)/2\rfloor +1 \\
\nonumber  &=&\frac{(n)(n+4)}{4} - \frac{n\text{ mod }2}{4}-\frac{(2s-1)(2s+3)}{4}+\frac{1}{4}\\
\nonumber &\leq&(n(n+4)-(2s-1)(2s+3)+1)/4.  \end{eqnarray}
\end{proof}

This inequality on $T_3$ holds for all $s$. Therefore, minimizing the bound in (\ref{eq-bound-T3-max}) over s, we obtain:

\begin{theorem}
In a strongly connected synchronizing automaton $\Sigma$ with $n$ states, $$T_{3,\Sigma}\leq (\sqrt{5n^2+4n-12}-n+6)n/8. $$
\label{HighT3bound}
\end{theorem}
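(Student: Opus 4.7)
My plan is a routine one-variable optimization of the two-branch bound supplied by the preceding proposition: for every integer $s$ with $1\leq s\leq n/2$,
\[
T_{3,\Sigma}\;\leq\;\max\bigl\{\,f(s),\,g(s)\,\bigr\},\qquad f(s)=\tfrac{n(s+2)}{2},\;\;g(s)=\tfrac{n(n+4)-(2s-1)(2s+3)+1}{4},
\]
and the theorem amounts to choosing $s$ to make this maximum as small as possible.

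First I will observe that $f$ is affine and strictly increasing in $s$, whereas $g$, after expanding $(2s-1)(2s+3)=4s^{2}+4s-3$, is a downward-opening quadratic in $s$ and hence strictly decreasing on the admissible range. Consequently $\max\{f,g\}$ is minimized at the unique positive intersection $f(s)=g(s)$. Setting these equal, clearing denominators, and collecting terms reduces the problem to a quadratic in $s$ of the form $4s^{2}+(2n+4)s=n^{2}+C$, where $C$ is a small explicit constant that absorbs the ``$+1$'' in $g$ together with the parity term $(n\bmod 2)/4$ that was dropped when writing the proposition. Its positive root is $s^{*}=\bigl(\sqrt{5n^{2}+4n-12}-(n+2)\bigr)/4$ once $C$ is tracked carefully.

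Substituting $s^{*}$ back into $f(s^{*})=n(s^{*}+2)/2$ and simplifying algebraically then yields the claimed closed form $n\bigl(\sqrt{5n^{2}+4n-12}-n+6\bigr)/8$. Two routine checks remain: (i) $s^{*}$ indeed lies in the admissible interval $[1,n/2]$, which is immediate since $s^{*}=\Theta(n)$ with leading coefficient $(\sqrt 5-1)/4<1/2$; and (ii) rounding $s^{*}$ to the nearest admissible integer does not exceed the closed-form bound, which one checks by comparing $f(\lfloor s^{*}\rfloor)$ and $g(\lceil s^{*}\rceil)$ and using the slack carried by the parity correction.

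The genuine obstacle here is not conceptual but arithmetic: keeping the ``$+1$'', the $(n\bmod 2)/4$ correction, and the integer-rounding slack consistent so that the constants under the radical reduce exactly to $5n^{2}+4n-12$ rather than to an equivalent but differently-presented expression.
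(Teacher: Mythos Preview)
Your approach is exactly the paper's: minimize $\max\{f(s),g(s)\}$ by locating where the increasing $f$ meets the decreasing $g$, solve the resulting quadratic $s^{2}+s(n+2)/2+(1-n^{2}/4)=0$ to get $s^{*}=(\sqrt{5n^{2}+4n-12}-(n+2))/4$, and substitute back into $f$. The paper does not track the parity term (it was already discarded when the proposition was stated) nor address the integrality of $s$---it simply plugs the real $s^{*}$ in---so your cautions about $C$ and your rounding check~(ii) go beyond what the paper itself does.
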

\begin{proof}
In Equation (\ref{eq-bound-T3-max}), $n(s+2)/2$ is an increasing function of $s$, while $(n(n+4)-(2s-1)(2s+3)+1)$ is a decreasing function of $s$. Therefore, the minimum of (\ref{eq-bound-T3-max}) is reached at the intersection of the two functions. The value of $s$ that minimizes the expression in (\ref{eq-bound-T3-max}) is the solution of 
$$n(s+2)/2 = (n(n+4)-(2s-1)(2s+3)+1)/4.$$
 This is equivalent to:
$$s^2+s(n+2)/2 +(1-n^2/4)=0, $$
which has the solution
$$s=(\sqrt{5n^2+4n-12}-(n+2))/4.$$
This solution is positive and lower than $n/2$ as long as $n>3$. Plugging this value into Equation (\ref{eq-bound-T3-max}) we obtain
$$T_3\leq n(\sqrt{5n^2+4n-12}-n+6)/8.$$
\end{proof}

Therefore, we have that $T_3<n^2/(6.4...)$ for $n$ sufficiently large. This bound strictly improves on our previous one (\ref{firstbound}).
\section{A Counterexample to a Conjecture on the Synchronizing Probability Function}

In this section, we present an infinite family of automata which are counterexamples to Conjecture \ref{ConjK} and Conjecture \ref{ConjT3}. This family provides us with a lower bound on the maximum value of the triple rendezvous time for automata with $n$ states for every odd integer $n\geq 9$. Let us name the automaton of the family with $n$ states $\mathcal {TR}_n$. The automaton $\mathcal{TR}_9$ is shown in  Fig. \ref{Counter}. It has $9$ states, labelled from $q_1$ to $q_9$, and two letters $a$ and $b$ defined as (zeros are replaced by dots):
\arraycolsep=1.8pt\def\arraystretch{0.8}
$$a=\left( \begin{array}{ccccccccc}
.&.&.&.&.&.&1&.&.\\
.&.&.&1&.&.&.&.&.\\
.&.&1&.&.&.&.&.&.\\
.&1&.&.&.&.&.&.&.\\
.&.&1&.&.&.&.&.&.\\
.&.&.&.&.&.&.&1&.\\
1&.&.&.&.&.&.&.&.\\
.&.&.&.&.&1&.&.&.\\
.&.&.&.&.&.&.&.&1 \end{array} \right)
b=\left( \begin{array}{ccccccccc}
.&1&.&.&.&.&.&.&.\\
.&.&1&.&.&.&.&.&.\\
1&.&.&.&.&.&.&.&.\\
.&.&.&.&1&.&.&.&.\\
.&.&.&.&.&1&.&.&.\\
.&.&.&1&.&.&.&.&.\\
.&.&.&.&.&.&.&.&1\\
.&.&.&.&.&.&.&1&.\\
.&.&.&.&.&.&1&.&.\end{array} \right).$$%
.

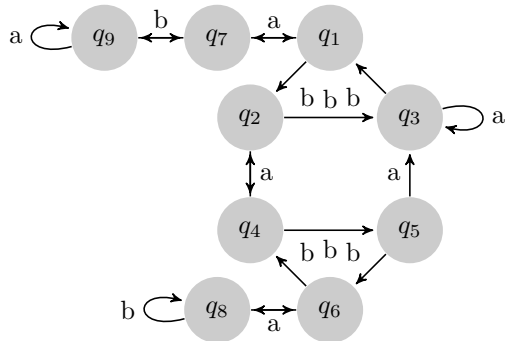
\begin{figure}[!htb]
\begin{center}
\scalebox{1}{
\begin{tikzpicture}[->,>=stealth',shorten >=1pt,auto,node distance=1.5cm,
                    semithick]
  \tikzstyle{every state}=[fill=light-gray,draw=none,text=black, scale=1]

  \node[state] 			(A)                    {$q_9$};
  \node[state]         	(B) [right of=A] {$q_7$};
  \node[state]         	(C) [ right of=B] {$q_1$};
  \node[state]         	(D) [below left of=C, node distance=1.5cm] {$q_2$};
  \node[state]         	(E) [below right of=C, node distance=1.5cm] {$q_3$};
  \node[state]         	(F) [below of=D] {$q_4$};
  \node[state]         	(G) [below of=E] {$q_5$};
  \node[state]         	(H) [below right of=F, node distance=1.5cm] {$q_6$};
  \node[state]         	(I) [left of=H] {$q_8$};

  \path (A) edge              	node {b} (B)
            edge [loop left]  	node {a} (A)
        (B) edge   				node {} (A)
            edge              	node {a} (C)
        (C) edge              	node {b} (D)
         	edge  				node {} (B)
        (D) edge  			  	node {b} (E)
            edge  			  	node {a} (F)
        (E) edge  			  	node {b} (C)
            edge [loop right]	node {a} (E)
        (F) edge  			  	node {} (D)
            edge 			[swap]	node {b} (G)
        (G) edge  			  	node {a} (E)
            edge 		[swap]		node {b} (H)
        (H) edge  			  	node {a} (I)
            edge 		[swap]		node {b} (F)
        (I) edge  			  	node {} (H)
            edge [loop left]	node {b} (I);
\end{tikzpicture}}
\end{center}
\caption{The automaton $\mathcal {TR}_9$.}
\label{Counter}
\end{figure}

\rev{With the parameter $j$ in Conjecture \ref{ConjK} fixed to $2$, this conjecture implies that for any synchronizing automaton with $9$ states, $k(11)\geq 2/8$. Conjecture \ref{ConjK} is stronger than Conjecture \ref{ConjT3}. This latter implies, for the same automaton, that $T_{3}\leq 11$.
However, we have that $T_{3,\mathcal{TR}_9}=12$, thus disproving both conjectures: 

\begin{proposition}
The TRT of the automaton $\mathcal {TR}_9$ is equal to 12.
\end{proposition}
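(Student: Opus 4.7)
The plan is to establish $T_{3,\mathcal{TR}_9}=12$ by matching upper and lower bounds, each resting on two simple structural observations about the letters. First, $b$ is a permutation of the states (cycles $(q_1q_2q_3)(q_4q_5q_6)(q_7q_9)$ with $q_8$ fixed), hence left-multiplication by $b$ preserves the weight of every column. Second, the function $f_a$ induced by $a$ has injective fibres except that $f_a^{-1}(q_3)=\{q_3,q_5\}$, and $q_5\notin\mathrm{image}(f_a)$. A direct count of fibre sizes shows that, for a column with support $S$, the support of $a\cdot c$ is $f_a^{-1}(S)$, whose cardinality equals $|S|+1$ when $q_3\in S$ and $q_5\notin S$, equals $|S|-1$ when $q_5\in S$ and $q_3\notin S$, and equals $|S|$ otherwise. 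In particular, a weight-$3$ column can only be produced for the first time by applying $a$ to a weight-$2$ column whose support is a ``dangerous'' pair $\{q_3,q_k\}$ with $k\neq 5$.

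For the upper bound $T_{3,\mathcal{TR}_9}\leq 12$, I would perform a breadth-first search on the pair graph $G(t)$ (starting from $G(0)$ with no edges, and expanding by left-multiplication with $a$ and $b$) until a dangerous pair first appears; this occurs at $t=11$, so prepending $a$ yields a length-$12$ word $W$ having a column of weight $3$. The witness word $W$ can then be verified by a direct matrix multiplication in $\mathcal{TR}_9$.

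For the lower bound $T_{3,\mathcal{TR}_9}\geq 12$, I would show that no dangerous pair appears in $A(t)$ for $t\leq 11$. By the structural observations above, it suffices to track the set of weight-$2$ columns reachable at each step, via the recursion $A(t+1)=A(t)\cup a\cdot A(t)\cup b\cdot A(t)$ truncated to supports of weight at most $2$. The permutation structure of $b$ groups these supports into orbits, so the effective number of distinct cases is small; at each iteration I would check that every weight-$2$ column has support either disjoint from $\{q_3\}$ or equal to $\{q_3,q_5\}$, and verify that this condition first fails at $t=11$.

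The main obstacle is the bookkeeping in the lower-bound enumeration: while each step involves only a subset of the $9+\binom{9}{2}=45$ admissible supports of weight at most $2$, propagating them for $11$ iterations yields enough distinct pairs that the case analysis is delicate by hand. Leveraging the symmetry under $b$ mitigates this to a manageable set of orbit representatives, but the argument will likely be presented either via a short organised table listing the successive pair graphs $G(t)$ or via a computer-assisted check confirming that the first dangerous pair arises exactly at $t=11$.
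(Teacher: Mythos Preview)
Your proposal is correct, and the structural observations about $a$ and $b$ are sound: $b$ is indeed the permutation $(q_1q_2q_3)(q_4q_5q_6)(q_7q_9)(q_8)$, and the only non-trivial fibre of $f_a$ is $f_a^{-1}(q_3)=\{q_3,q_5\}$, so a weight-$3$ column can only be created by left-multiplying by $a$ a weight-$2$ column whose support is a ``dangerous'' pair $\{q_3,q_k\}$, $k\neq 5$.

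The paper's own proof, however, takes a much more bare-handed route: it simply displays the full matrix $A_{\mathcal{TR}_9}(11)$ (all $25$ reachable columns, grouped by the value of $t$ at which they first appear) and observes that every column has weight at most two, then exhibits the explicit length-$12$ word $abbabbababba$ mapping $q_3,q_5,q_9$ to $q_3$. There is no structural analysis of fibres or dangerous pairs; the lower bound is pure enumeration and the upper bound is a single verifiable witness. Your approach and the paper's ultimately perform the same BFS on weight-$\leq 2$ columns, but you wrap it in an explanatory mechanism (why only dangerous pairs matter, how $b$-orbits reduce the case count), whereas the paper just records the output. Your framing buys insight and is closer in spirit to the paper's later argument for the general family $\mathcal{TR}_{2k+7}$, where exactly this kind of step-by-step column tracking is done; the paper's version for $\mathcal{TR}_9$ buys brevity and immediate verifiability at the cost of opacity.
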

\begin{proof}
On the one hand, the set of reachable \rev{vectors} at $t=11$, in which we separated the columns by blocks to show the evolution of $A(t)$ with respect to $t$, is equal to: 

\arraycolsep=1.8pt\def\arraystretch{0.8}
$$A_{tr_9}(11)=
\left( \begin{array}{ccccccccc|c|c|c|c|c|c|c|c|cc|ccc|ccc}
1 & . & . & . & . & . & . & . & . & . & . & 1 & . & . & . & . & . & . & . & 1 & 1 & . & . & . & . \\
. & 1 & . & . & . & . & . & . & . & . & 1 & . & . & . & . & . & . & 1 & . & . & . & . & . & . & . \\
. & . & 1 & . & . & . & . & . & . & 1 & . & . & . & . & . & . & . & . & . & . & . & . & 1 & 1 & . \\
. & . & . & 1 & . & . & . & . & . & . & 1 & . & . & . & . & . & 1 & . & . & . & . & . & . & . & 1 \\
. & . & . & . & 1 & . & . & . & . & 1 & . & . & . & . & . & 1 & . & . & . & . & . & 1 & . & . & . \\
. & . & . & . & . & 1 & . & . & . & . & . & 1 & . & . & 1 & . & . & . & 1 & . & . & . & . & . & . \\
. & . & . & . & . & . & 1 & . & . & . & . & . & 1 & . & . & 1 & . & . & 1 & 1 & . & . & . & . & 1 \\
. & . & . & . & . & . & . & 1 & . & . & . & . & 1 & 1 & . & . & . & . & . & . & 1 & . & 1 & . & . \\
. & . & . & . & . & . & . & . & 1 & . & . & . & . & 1 & 1 & . & 1 & 1 & . & . & . & 1 & . & 1 & . \end{array} \right).$$ 
and contains only columns of weight one or two. 
 On the other hand, the word $abbabbababba$, which is twelve letters long, maps states $q_3$, $q_5$ and $q_9$ on state $q_3$.
\end{proof}
}

Figure \ref{Comparison} represents the SPF of $\mathcal{TR}_9$, compared with the SPF of the automaton of the \v Cern{\'y} family with $9$ states. \rev{For} $t=11$, the SPF of \v Cern{\'y}'s automaton is larger than the SPF of $\mathcal{TR}_9$.

\begin{figure}[!htb]
\begin{center}
%
%
\scalebox{1}{
\begin{tikzpicture}

\begin{axis}[%
width=2.52083333333333in,
height=1.565625in,
scale only axis,
xmin=0,
xmax=70,
xlabel={$t$},
ymin=0,
ymax=1.1,
ylabel={$k(t)$},
ylabel style={rotate=-90},
axis x line*=bottom,
axis y line*=left
]

\addplot [color=black,dashed,forget plot]
  table[row sep=crcr]{%
1	0.124999999973383\\
2	0.124999997368377\\
3	0.142857142012872\\
4	0.142857142854098\\
5	0.166666666671915\\
6	0.166666666665606\\
7	0.199999999848444\\
8	0.20000000008551\\
9	0.222222222529055\\
10	0.24999999933658\\
11	0.249999999993378\\
12	0.250000000028081\\
13	0.24999999995822\\
14	0.285714285751482\\
15	0.285714285706632\\
16	0.333333334812778\\
17	0.333333333192627\\
18	0.333333333662907\\
19	0.333333332591508\\
20	0.33333333213406\\
21	0.374999999951399\\
22	0.375000000766292\\
23	0.400000000062562\\
24	0.399999999928497\\
25	0.428571428444798\\
26	0.42857142856144\\
27	0.444444444411872\\
28	0.500000000003922\\
29	0.500000000136168\\
30	0.500000000027029\\
31	0.500000000034447\\
32	0.499999999997669\\
33	0.500000000004093\\
34	0.50000000001404\\
35	0.500000000149811\\
36	0.555555556084045\\
37	0.571428571430772\\
38	0.57142857141281\\
39	0.599999999998033\\
40	0.600000001178699\\
41	0.624999999940485\\
42	0.624999999999957\\
43	0.666666666653896\\
44	0.666666666673194\\
45	0.66666666750163\\
46	0.66666666407383\\
47	0.666666666667766\\
48	0.714285714313291\\
49	0.714285714288053\\
50	0.749999999999716\\
51	0.750000000117836\\
52	0.749999999962029\\
53	0.750000000049113\\
54	0.777777775795627\\
55	0.799999999996174\\
56	0.800000000001091\\
57	0.833333333101336\\
58	0.833333333321661\\
59	0.857142857155623\\
60	0.857142857157612\\
61	0.874999999991843\\
62	0.874999999948727\\
63	0.888888888890264\\
64	1.00000000024282\\
65	0.999999999798092\\
66	0.999999999798092\\
67	0.999999999798092\\
68	0.999999999798092\\
69	0.999999999798092\\
70	0.999999999798092\\
};
\addplot [color=black,solid,forget plot]
  table[row sep=crcr]{%
1	0.125000000019128\\
2	0.142857140133032\\
3	0.166666664999724\\
4	0.199999999999903\\
5	0.199999999999363\\
6	0.199999999997743\\
7	0.199999999995583\\
8	0.199999999556866\\
9	0.222222222225213\\
10	0.222222222222285\\
11	0.222222222379628\\
12	0.249999999995623\\
13	0.285714285609714\\
14	0.333333333483779\\
15	0.333333333359064\\
16	0.375000000000284\\
17	0.400000000002933\\
18	0.400000000022374\\
19	0.40000000000083\\
20	0.428571428571331\\
21	0.444444444444457\\
22	0.444444444445935\\
23	0.444444444448862\\
24	0.499999999981355\\
25	0.555555555555031\\
26	0.555555555838964\\
27	0.555555555553781\\
28	0.600000000014688\\
29	0.599999999976035\\
30	0.636363636336057\\
31	0.666666666646279\\
32	0.666666666707016\\
33	0.666666666567949\\
34	0.700000000000387\\
35	0.750000000002103\\
36	0.777777777777999\\
37	0.777777777782006\\
38	0.800000000000296\\
39	0.80000000002201\\
40	0.833333333303273\\
41	0.857142857139678\\
42	0.874999999887507\\
43	0.888888888889198\\
44	1.00000000024282\\
45	0.999999999798092\\
46	0.999999999798092\\
47	0.999999999798092\\
48	0.999999999798092\\
49	0.999999999798092\\
50	0.999999999798092\\
51	0.999999999798092\\
52	0.999999999798092\\
53	0.999999999798092\\
54	0.999999999798092\\
55	0.999999999798092\\
56	0.999999999798092\\
57	0.999999999798092\\
58	0.999999999798092\\
59	0.999999999798092\\
60	0.999999999798092\\
61	0.999999999798092\\
62	0.999999999798092\\
63	0.999999999798092\\
64	0.999999999798092\\
65	0.999999999798092\\
66	0.999999999798092\\
67	0.999999999798092\\
68	0.999999999798092\\
69	0.999999999798092\\
70	0.999999999798092\\
};
\addplot [color=gray,dotted,forget plot]
  table[row sep=crcr]{%
0	1\\
70	1\\
};
\addplot [color=gray,dotted,forget plot]
  table[row sep=crcr]{%
11	0\\
11	1\\
};
\addplot [color=black,solid,forget plot]
  table[row sep=crcr]{%
0	0.111111111111111\\
1	0.125\\
};
\end{axis}
\end{tikzpicture}}%
\end{center}
\caption{The synchronizing probability function of $\mathcal {TR}_9$ (solid curve), and of the automaton of the \v Cern{\'y} family with $9$ states (dashed curve). We have $k_{\mathcal{TR}_9}(11)=2/9$.}
\label{Comparison}
\end{figure}
We can extend $\mathcal{TR}_9$ to an infinite family of automata with an odd number of
states. Starting from $\mathcal{TR}_n$, let $l_1$ be the letter such that $q_nl_1=q_n$ and $l_2$ the other one. The automaton $\mathcal{TR}_{n+2}$ has $n+2$ states $q_1, \dots, q_{n+2}$, the two letters $l_1$ and $l_2$, and the effects $q_nl_1=q_{n+2}$, $q_{n+2}l_1=q_{n}$, $q_{n-1}l_2=q_{n+1}$, $q_{n+1}l_2=q_{n-1}$, $q_{n+2}l_2=q_{n+2}$ and $q_{n+1}l_1=q_{n+1}$. All the others effects of the letters are the same as in $\mathcal{TR}_n$. Figure \ref{Graphfamily1113} show the automata $\mathcal{TR}_{11}$ and $\mathcal{TR}_{13}$, and Fig. \ref{GeneralFamily} shows $\mathcal{TR}_{2k+7}$, with $k$ odd.

\begin{figure}[!htb]
\begin{center}
\scalebox{0.91}{
\begin{tikzpicture}[->,>=stealth',shorten >=1pt,auto,node distance=1.5cm,
                    semithick]
  \tikzstyle{every state}=[fill=light-gray,draw=none,text=black, scale=0.8]

  \node[state] 			(A)                    {$q_9$};
  \node[state]         	(B) [right of=A] {$q_7$};
  \node[state]         	(C) [ right of=B] {$q_1$};
  \node[state]         	(D) [below left of=C, node distance=2cm] {$q_2$};
  \node[state]         	(E) [below right of=C, node distance=2cm] {$q_3$};
  \node[state]         	(F) [below of=D] {$q_4$};
  \node[state]         	(G) [below of=E] {$q_5$};
  \node[state]         	(H) [below right of=F, node distance=2cm] {$q_6$};
  \node[state]         	(I) [left of=H] {$q_8$};
  \node[state]         	(J) [left of=A] {$q_{11}$};
  \node[state]         	(K) [left of=I] {$q_{10}$};

  \path (A) edge              	node {b} (B)
            edge 				node {} (J)
        (B) edge   				node {} (A)
            edge              	node {a} (C)
        (C) edge              	node {b} (D)
         	edge  				node {} (B)
        (D) edge  			  	node {b} (E)
            edge  			  	node {a} (F)
        (E) edge  			  	node {b} (C)
            edge [loop right]	node {a} (E)
        (F) edge  			  	node {} (D)
            edge 	[swap]			node {b} (G)
        (G) edge  			  	node {a} (E)
            edge 	[swap]			node {b} (H)
        (H) edge  			  	node {a} (I)
            edge 	[swap]			node {b} (F)
        (I) edge  			  	node {} (H)
            edge 				node {b} (K)
        (J) edge  			  	node {a} (A)
            edge [loop left]	node {b} (K) 
        (K) edge  			  	node {} (I)
            edge [loop left]	node {a} (K);   
  \node[state]         	(W) [above right of=E, node distance=2cm] {$q_{13}$};        
  \node[state]         	(U) [right of=W] {$q_{11}$};
  \node[state] 			(L) [right of=U]{$q_9$};                 
  \node[state]         	(M) [right of=L] {$q_7$};
  \node[state]         	(N) [ right of=M] {$q_1$};
  \node[state]         	(O) [below left of=N, node distance=2cm] {$q_2$};
  \node[state]         	(P) [below right of=N, node distance=2cm] {$q_3$};
  \node[state]         	(Q) [below of=O] {$q_4$};
  \node[state]         	(R) [below of=P] {$q_5$};
  \node[state]         	(S) [below right of=Q, node distance=2cm] {$q_6$};
  \node[state]         	(T) [left of=S] {$q_8$}; 
  \node[state]         	(V) [left of=T] {$q_{10}$};
  \node[state]         	(X) [left of=V] {$q_{12}$}; 

  \path (W)edge  			  	node {b} (U)
            edge [loop left]	node {a} (K)
        (U)edge  			  	node {} (W)
            edge 				node {a} (L)  
  		(L) edge              	node {b} (M)
            edge 				node {} (U)
        (M) edge   				node {} (L)
            edge              	node {a} (N)
        (N) edge              	node {b} (O)
         	edge  				node {} (M)
        (O) edge  			  	node {b} (P)
            edge  			  	node {a} (Q)
        (P) edge  			  	node {b} (N)
            edge [loop right]	node {a} (P)
        (Q) edge  			  	node {} (O)
            edge 	[swap]			node {b} (R)
        (R) edge  			  	node {a} (P)
            edge 		[swap]		node {b} (S)
        (S) edge  			  	node {a} (T)
            edge 		[swap]		node {b} (Q)
        (T) edge  			  	node {} (S)
            edge 				node {b} (V)
        (V) edge  			  	node {} (T)
            edge				node {a} (X) 
        (X) edge  			  	node {} (V)
            edge [loop left]	node {b} (V); 
\end{tikzpicture}}
\end{center}
\caption{The automata $\mathcal {TR}_{11}$ and $\mathcal {TR}_{13}$.}
\label{Graphfamily1113}
\end{figure}
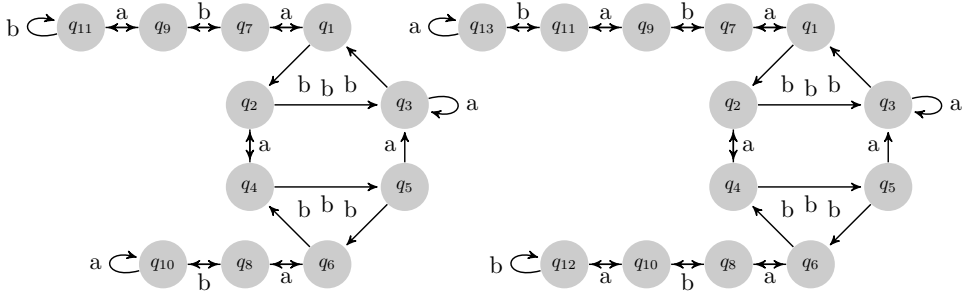

\begin{figure}[!htb]
\begin{center}
\scalebox{1}{
\begin{tikzpicture}[->,>=stealth',shorten >=1pt,auto,node distance=1.5cm,
                    semithick]
  \tikzstyle{every state}=[fill=light-gray,draw=none,text=black, scale=1]

  \node 			(A)                    {$\dots$};
  \node[state]         	(B) [right of=A] {$q_7$};
  \node[state]         	(C) [ right of=B] {$q_1$};
  \node[state]         	(D) [below left of=C, node distance=1.5cm] {$q_2$};
  \node[state]         	(E) [below right of=C, node distance=1.5cm] {$q_3$};
  \node[state]         	(F) [below of=D] {$q_4$};
  \node[state]         	(G) [below of=E] {$q_5$};
  \node[state]         	(H) [below right of=F, node distance=1.5cm] {$q_6$};
  \node[state]         	(I) [left of=H] {$q_8$};
  \node[state]         	(J) [left of= A, node distance=2cm] {$q_{2k+5}$};
  \node         	(K) [left of=I] {$\dots$};
  \node[state]         	(L) [left of =J, node distance=2cm] {$q_{2k+7}$};
  \node[state]         	(M) [left of= K, node distance=2cm] {$q_{2k+6}$};

  \path (A) edge              	node {b} (B)
            edge 			  	node {} (J)
        (B) edge   				node {} (A)
            edge              	node {a} (C)
        (C) edge              	node {b} (D)
         	edge  				node {} (B)
        (D) edge  			  	node {b} (E)
            edge  			  	node {a} (F)
        (E) edge  			  	node {b} (C)
            edge [loop right]	node {a} (E)
        (F) edge  			  	node {} (D)
            edge 			[swap]	node {b} (G)
        (G) edge  			  	node {a} (E)
            edge 			[swap]	node {b} (H)
        (H) edge  			  	node {a} (I)
            edge 			[swap]	node {b} (F)
        (I) edge  			  	node {} (H)
            edge 	node {b} (K)
        (J) edge  			  	node {a} (A)
            edge 	node {} (L)
        (K) edge  			  	node {a} (M)
            edge 	node {} (I)
        (L) edge  	[loop left] 	node {a} (M)
            edge 	node {b} (J)
        (M) edge  	[loop left] 	node {b} (M)
            edge 	node {} (K) ;
\end{tikzpicture}}
\end{center}
\caption{The automaton $\mathcal{TR}_{2k+7}$, with $k$ odd.}
\label{GeneralFamily}
\end{figure}
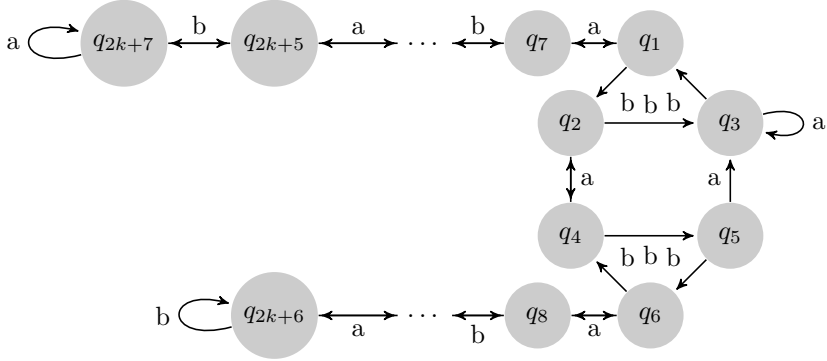

We will now proof the TRT value for $\mathcal{TR}_{2k+7}$. In order to do so, we first provide with a set of lemmas, before proving the result. Let us denote by $(q_i,q_j)$ the vector $(e_i+e_j)^T$. We will analyze the evolution of the matrix $A(t)$ of $\mathcal{TR}_{2k+7}$ with respect to $t$. Notice that in order to obtain $A(t)$ from $A(t-1)$, we multiply the letters by the vectors of $A(t)$ from the right.  We say that a vector $c_{old}$ \emph{can induce} a vector $c_{new}$ if there is a letter $l$ such that $lc_{old}=c_{new}$. We say that a vector $c_{new}$ of $A(t)$ \emph{is induced} by a vector $c_{old}$ of $A(t-1)$ at $t$ if $c_{new}$ is not in $A(t-1)$ and there is a letter $l$ such that $lc_{old}=c_{new}$. As there are two letters in $\mathcal{TR}_n$, each vector can induce at most two vectors. We first notice that:
\begin{lemma}
\label{NewCol}
The reachable vectors induced at $t>1$ are induced by vectors that were themselves induced at $t-1$.
\end{lemma}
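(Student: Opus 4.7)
The statement concerns the recursive structure of how new columns enter $A(t)$ as $t$ grows, and the proof is a direct contradiction argument leveraging the fact that words act by left multiplication on columns.

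The plan is to fix a vector $c_{new}$ that is induced at some $t > 1$, take an arbitrary $c_{old} \in A(t-1)$ with $\ell c_{old} = c_{new}$ for some letter $\ell$, and show that $c_{old}$ cannot belong to $A(t-2)$. First I would record the simple matrix-multiplication observation underlying everything: for any word $W$ and any letter $\ell$, each column of $\ell W$ is obtained by left-multiplying the corresponding column of $W$ by $\ell$; in particular, if $c$ is a column of a word $W\in\Sigma^{\leq s}$, then $\ell c$ is a column of $\ell W\in \Sigma^{\leq s+1}$, hence $\ell c \in A(s+1)$.

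Next I would argue by contradiction. Suppose $c_{old} \in A(t-2)$. Then $c_{old}$ is a column of some word $W \in \Sigma^{\leq t-2}$. By the observation above, $c_{new} = \ell c_{old}$ is a column of $\ell W \in \Sigma^{\leq t-1}$, so $c_{new} \in A(t-1)$. This contradicts the assumption that $c_{new}$ is induced at $t$, i.e., $c_{new} \notin A(t-1)$. Hence $c_{old} \in A(t-1) \setminus A(t-2)$, which is precisely the statement that $c_{old}$ was itself induced at $t-1$ (this notion is well-defined because $t > 1$ guarantees $t-1 \geq 1$).

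There is no real obstacle here; the whole argument is one line once the column-by-column behaviour of left multiplication is recorded. The only point worth being explicit about is that every $c_{new} \in A(t) \setminus A(t-1)$ does admit at least one predecessor $c_{old} \in A(t-1)$ (since a column of a length-$t$ word factors through a length-$(t-1)$ word via the leftmost letter), so the lemma is non-vacuous; but this is immediate from the same matrix-multiplication observation used in the contradiction step.
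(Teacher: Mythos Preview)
Your proposal is correct and follows essentially the same argument as the paper: assume the inducing vector $c_{old}$ lies in $A(t-2)$, observe that then $\ell c_{old}=c_{new}$ would already belong to $A(t-1)$, and derive a contradiction. The only difference is that you spell out the column-by-column behaviour of left multiplication and the non-vacuity of the statement more explicitly, while the paper compresses this into two sentences.
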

\begin{proof}
A vector $c_{new}$ which is in $A(t)$ and not in $A(t-1)$ is equal to $lc_{old}$, for a letter $l$ and some vector $c_{old}$ of $A(t-1)$. If $c_{old}$ was in $A(t-2)$, then $lc_{old}=c_{new}$ would be in $A(t-1)$, therefore $c_{old}$ was induced at $t-1$.
\end{proof}

Therefore, in order to analyze the evolution of $A(t)$ up to $t=T_3$, for each value of $t$, we only have to consider the columns induced at $t-1$. 

The structure of $\mathcal{TR}_n$ provides us with the following property for pairs of states in the tails:
\begin{lemma}
\label{TailEffect}
The vectors that can induce $(q_{2i}, q_{2j-1})$, with $i,j>3$, are also the only vectors that $(q_{2i}, q_{2j-1})$ can induce.
\end{lemma}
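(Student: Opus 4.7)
My plan is to exploit the rigid involutive structure of the letters at the tail states of $\mathcal{TR}_{2k+7}$. Writing $f_l \colon Q \to Q$ for the state transformation induced by letter $l$, I would first verify directly from the recursive construction of the family (Figure \ref{GeneralFamily}) two structural properties for every tail state $q_m$ (i.e., $m \geq 7$) and every letter $l \in \Sigma$: (i) $f_l(f_l(q_m)) = q_m$, and (ii) $f_l^{-1}(f_l(q_m)) = \{q_m\}$. Indeed, each extension step attaches the new pair $q_{n+1}, q_{n+2}$ so that one letter swaps it with a neighbor (an involutive $2$-cycle) while the other acts as a self-loop, and the boundary involutions $q_1 \leftrightarrow q_7$, $q_6 \leftrightarrow q_8$ under $a$ propagate (i)--(ii) down to $q_7$ and $q_8$. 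A useful additional observation is that $f_l$ preserves the disjoint classes $\{q_1\} \cup \{q_7, q_9, \dots\}$ (odd) and $\{q_6\} \cup \{q_8, q_{10}, \dots\}$ (even).

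Now fix $v = (q_{2i}, q_{2j-1})$ with $i, j > 3$, so both coordinates are tail states, one in each class. For each $l \in \Sigma$, the identity $(lv)_k = v_{f_l(k)}$ together with property (ii) yields that the support of $lv$ is $\{f_l(q_{2i}), f_l(q_{2j-1})\}$, and the two entries are distinct by disjointness of the classes; hence $lv$ is itself a weight-$2$ vector with this support. This gives one containment: every vector that $v$ can induce also appears among the weight-two vectors capable of inducing $v$ (via the same letter, by property (i)).

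For the reverse containment, we work in the regime $t < T_3$, where every column of $A(t)$ has weight at most two. Let $c$ be such a vector with $lc = v$. The defining relations $c_{f_l(k)} = v_k$ force $c_{f_l(q_{2i})} = c_{f_l(q_{2j-1})} = 1$ and, by property (ii), $c_k = 0$ for every other $k$ in the image of $f_l$. The weight-two bound then forces any unconstrained entry (i.e., one outside $\mathrm{Im}(f_l)$) to vanish, leaving $c = (f_l(q_{2i}), f_l(q_{2j-1})) = lv$.

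Ranging over $l \in \Sigma$ yields equality of the two sets, which is precisely the claim. The main subtlety I expect is that letter $a$ acts non-injectively on the interior states $q_2, \dots, q_5$ (both $q_3$ and $q_5$ map to $q_3$), so without the tail hypothesis a weight-three vector such as $(q_1, q_5, q_6)$ would also be a preimage of the tail pair $(q_7, q_8)$ under $a$; the assumption $i, j > 3$ circumvents this by ensuring that all relevant preimages stay within the tail or on the boundary states $q_1, q_6$, which are themselves involutively paired with $q_7, q_8$.
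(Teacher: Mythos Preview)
Your argument follows the same involutive idea as the paper, which simply records $a^2c=b^2c=c$ for $c=(q_{2i},q_{2j-1})$ and asserts the analogous identity $a^2c'=b^2c'=c'$ for any preimage $c'$; from $lc'=c$ and $l^2c'=c'$ one gets $c'=lc$. Your version unpacks this into the pointwise properties (i)--(ii) and, importantly, makes the weight-$\leq 2$ hypothesis explicit. That is a genuine improvement: the paper's second assertion is literally false for weight-three preimages (for instance $c'=(q_1,q_5,q_6)$ satisfies $ac'=(q_7,q_8)$ while $a^2c'=(q_1,q_6)\neq c'$), so the lemma only holds in the $t<T_3$ regime you invoke.

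Two small imprecisions remain. First, when computing the support of $lv$ you need $f_l^{-1}(q_m)=\{f_l(q_m)\}$ for tail $q_m$, whereas your property (ii) gives $f_l^{-1}(f_l(q_m))=\{q_m\}$; the two coincide via (i) only when $f_l(q_m)$ is itself tail, so the boundary cases $f_a(q_7)=q_1$ and $f_a(q_8)=q_6$ need the direct check you hint at. Second, your closing remark misattributes the resolution: $(q_7,q_8)$ \emph{is} a tail pair with $i=j=4$, and $(q_1,q_5,q_6)$ \emph{is} one of its $a$-preimages, so it is the weight bound---not the hypothesis $i,j>3$---that rules it out.
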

\begin{proof}
For any vector $c=(q_{2i}, q_{2j-1})$, with $i,j>3$, we have that $a^2c=b^2c=c$, and if for some column $c'$, we have $ac'=c$ or $bc'=c$, then $a^2c'=b^2c'=c'$. Therefore a vector that can be induced by $c$ can induce $c'$ and vice versa. 
\end{proof}
\begin{cor}
\label{CorOne}
If a vector $(q_{2i}, q_{2j-1})$, with $i,j>3$ is induced at $t$, then it can induce at most one vector at $t+1$.
\end{cor}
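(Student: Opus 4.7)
The plan is to combine Lemma \ref{TailEffect} with Lemma \ref{NewCol} directly. Let $c=(q_{2i}, q_{2j-1})$ with $i,j>3$. There are only two letters in the alphabet, so from $c$ at most two vectors, namely $ac$ and $bc$, can ever be induced in the next step. The goal is to show that one of these two candidates must already belong to $A(t)$, so at most one of them can be new at $t+1$.

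First, since $c$ is induced at $t$, there exists a vector $c_{prev} \in A(t-1)$ and a letter $l \in \{a,b\}$ with $l \, c_{prev} = c$. In particular, $c_{prev}$ lies in the preimage of $c$ under the action of the alphabet. By Lemma \ref{TailEffect}, the set of vectors that can induce $c$ coincides exactly with the set $\{ac,bc\}$ of vectors that $c$ can induce. Hence $c_{prev} \in \{ac, bc\}$.

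Now $c_{prev}$ belongs to $A(t-1) \subseteq A(t)$, so it is not newly induced at $t+1$. This eliminates one of the two candidate vectors obtainable from $c$ by applying a letter. Consequently, the only possibly new vector that $c$ can induce at $t+1$ is the remaining one in $\{ac,bc\} \setminus \{c_{prev}\}$, which gives the bound of at most one.

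The main step worth being careful about is the application of Lemma \ref{TailEffect}: one must observe that the ``vectors that can induce $c$'' form the preimage of $c$ under the two letters, while the ``vectors that $c$ can induce'' are precisely $ac$ and $bc$; the lemma equates these two sets, and it is this equality that forces $c_{prev}$ to coincide with one of $ac$ or $bc$. Once that identification is made, the corollary follows immediately, with no case analysis on $i,j$ needed beyond the hypothesis $i,j>3$ that triggers Lemma \ref{TailEffect}.
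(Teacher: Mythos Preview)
Your proof is correct and follows exactly the same line as the paper's: since $c$ was induced at $t$, its inducer $c_{prev}$ lies in $A(t-1)\subseteq A(t)$, and by Lemma~\ref{TailEffect} this $c_{prev}$ must be one of $ac$ or $bc$, leaving at most one genuinely new vector at $t+1$. Your write-up is simply a more explicit unpacking of the paper's one-sentence argument; the mention of Lemma~\ref{NewCol} in your opening is unnecessary (you never use it---the existence of $c_{prev}$ comes straight from the definition of ``induced at $t$''), but it does no harm.
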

\begin{proof}
Any vector can induce at most two vectors. Since a vector $(q_{2i}, q_{2j-1})$, with $i,j>3$ can only be induced by vectors that it can induce, one of the two possible vector that it can induce must already be in $A(t)$.
\end{proof}

We can now prove the result result:

\begin{proposition}
The triple rendezvous time of $\mathcal{TR}_{n}$, with $n=2k+7$ and $k\in \mathbb{N}$, is equal to $n+3$.
\end{proposition}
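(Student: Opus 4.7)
My plan is to establish $T_{3,\mathcal{TR}_n} = n+3$ by matching upper and lower bounds, both proceeding by direct combinatorial analysis of the family.

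For the upper bound $T_{3,\mathcal{TR}_n} \le n+3$, I would construct an explicit word $W$ of length $n+3$ together with three states that $W$ maps to a common image. A natural choice is the triple $\{q_3, q_5, q_n\}$ (which involves the current tail tip $q_n$), and to design $W$ as the concatenation of a prefix of length $n - 9$ that walks $q_n$ back along the zigzag tail into the core state $q_9$, followed by a length-$12$ suffix that completes the synchronization within the core (mirroring the word $abbabbababba$ that works for $\mathcal{TR}_9$). Since the two letters alternate the roles of ``self-loop'' and ``transport'' at consecutive tail tips, each step down from $q_{i+2}$ to $q_i$ in the tail costs exactly one letter, so a prefix of length $n-9$ is both necessary and sufficient to bring $q_n$ home while keeping $q_3, q_5$ in a controlled pair of core states; the suffix then finishes the job. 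The verification that such a prefix-plus-suffix word triple-synchronizes the chosen triple is a finite explicit check, and it can be phrased inductively in $n$ by showing that extending the tail by two states only adds one transport letter on each of the two ``sides'' of the prefix.

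For the lower bound $T_{3,\mathcal{TR}_n} \ge n+3$, I would show by induction on $t$ that every column of $A(t)$ has weight at most two for $t \le n+2$, i.e.\ that $G(t)$ grows slowly enough that no three rows ever carry a $1$ in a common column. The induction uses Lemma \ref{NewCol} to restrict attention at each step to the ``frontier'' of columns freshly induced at step $t-1$, so only a bounded number of candidate new pairs ever need to be tracked. The decisive structural restriction is Corollary \ref{CorOne}: any freshly induced pair $(q_{2i}, q_{2j-1})$ with $i, j > 3$ produces at most one further pair in the next step, so edges cannot proliferate in the tail. Combining this with an explicit, essentially static description of the core columns inherited from the $\mathcal{TR}_9$ dynamics on $\{q_1, \dots, q_8\}$, the frontier of $A(t)$ can be described as a single zigzag translate of a handful of core pairs pushed $t-9$ steps into the tail, and a direct check then shows that none of these frontier pairs can combine with another pair already present in $A(t)$ to yield a weight-$3$ column before time $n+3$.

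The main obstacle will be the bookkeeping in the lower bound: one has to describe the evolution of $A(t)$ with enough precision, \emph{uniformly} in $n = 2k+7$, to rule out every possible weight-$3$ column for every $t \le n+2$. The delicate point is the interaction between the (nearly static) core part of $A(t)$ and the progressively longer tail, because an unexpected ``shortcut'' through the tail could in principle create a weight-$3$ column earlier than $n+3$. The tools already at hand — Lemmas \ref{NewCol} and \ref{TailEffect} together with Corollary \ref{CorOne} — are tailored precisely to this bookkeeping, so the challenge is to set up a clean inductive invariant describing the frontier pairs at time $t$ as a function of $t$ and of the tail length $k$, rather than to introduce any new technique.
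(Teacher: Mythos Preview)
Your lower-bound strategy --- track the frontier of newly induced columns of $A(t)$ using Lemma~\ref{NewCol} and Corollary~\ref{CorOne} --- is exactly what the paper does, and the paper carries it out by explicitly listing, for each $t$ from $1$ to $n+3$, the column(s) induced at that step. This single tracking settles both bounds at once: through $t=n+2$ every induced column has weight two, and at $t=n+3$ multiplying $(q_3,q_{10})$ or $(q_3,q_{11})$ on the left by the letter $a$ finally yields weight three. Your separate upper-bound argument via an explicit word is therefore redundant.

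More seriously, the explicit word you sketch does not work as described. You assert a prefix of length $n-9$, but your own reasoning (one letter per jump $q_{i+2}\to q_i$) gives only $(n-9)/2$ letters to bring $q_n$ to $q_9$, so the total would be $(n-9)/2+12<n+3$ for $n\ge 11$ --- too short, hence impossible given the lower bound. The resolution is not that the arithmetic can be patched but that the construction itself fails: once the tail is extended, the transition at $q_9$ changes (in $\mathcal{TR}_9$ one has $q_9a=q_9$, whereas in $\mathcal{TR}_{11}$ one has $q_9a=q_{11}$), so the $\mathcal{TR}_9$ suffix $abbabbababba$ cannot simply be replayed after the prefix. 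The word that actually realises $T_3=n+3$ is read off from the column-tracking chain and is not a prefix--suffix splice of the $\mathcal{TR}_9$ word.

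A smaller caution on the lower-bound bookkeeping: your description of the frontier as ``a handful of core pairs pushed $t-9$ steps into the tail'' does not match the dynamics. The tail phase runs from $t=4$ (when $(q_7,q_8)$ is induced) to $t=2k+4$ and consists of a \emph{single} pair at each step, by Corollary~\ref{CorOne}; only after the frontier re-enters the core, from $t=2k+5$ onward, does it begin to branch, reaching several pairs by $t=2k+9$. The paper's short step-by-step listing is the cleanest way to nail down both directions simultaneously.
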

\begin{proof}
We consider the evolution of $A(t)$ with respect to $t$. 

The $n$ first columns of $A(t)$ are the identity matrix for $t\geq 0$. 

At $t=1$, the column $(q_3,q_5)$ is induced by $e_3$.

At $t=2$, the column $(q_2, q_4)$ is induced by $(q_3,q_5)$. 

At $t=3$, the column $(q_1,q_6)$ is induced by $(q_2,q_4)$. 

At $t=4$, the column $(q_7,q_8)$ is induced by $(q_1,q_6)$.

The column $(q_7,q_8)$ is such that Corollary \ref{CorOne} applies, and therefore induces only one column. For $4\leq t\leq 2k+4$, Corollary \ref{CorOne} applies at every step and the induced column is straightforward. At $t=2k+4$, the column $(q_6,q_9)$ is induced.

At $t=2k+5$, the column $(q_5,q_7)$ is induced by $(q_6,q_9)$.

At $t=2k+6$, the column $(q_4,q_9)$ is induced by $(q_5,q_7)$. 

At $t=2k+7$, the column $(q_2,q_{11})$ and the column $(q_6,q_7)$ are induced by $(q_4,q_9)$\footnote{Starting at $t=2k+7$, the vectors induced are not the same for $\mathcal{TR}_9$, $\mathcal{TR}_{11}$ and $\mathcal{TR}_{13}$, as the vectors including states $q_n$ or $q_{n-1}$ do not induce vectors with higher indices. However it does not change the result.}.

At $t=2k+8$, the column $(q_1,q_{13})$ is induced by $(q_2,q_{11})$, the columns $(q_1,q_8)$ and $(q_5,q_9)$ are induced by $(q_6,q_7)$.

At $t=2k+9$, the column $(q_3,q_{10})$ is induced by $(q_1,q_8)$. The columns $(q_7,q_{15})$ and $(q_3,q_{11})$ are induced by $(q_1,q_{13})$, the column $(q_4,q_7)$ is induced by $(q_5,q_9)$.

At $t=2k+10$, the product of letter $a$ with both the column $(q_3,q_{11})$ or the column $(q_3,q_{10})$ provides with a column of weight three. 

Therefore, the TRT is equal to $2k+10=n+3$, which concludes the proof.
\end{proof}

The graph in Fig. \ref{FamilySync} presents the synchronizing probability function of these automata with $9$, $11$ and $13$ states.
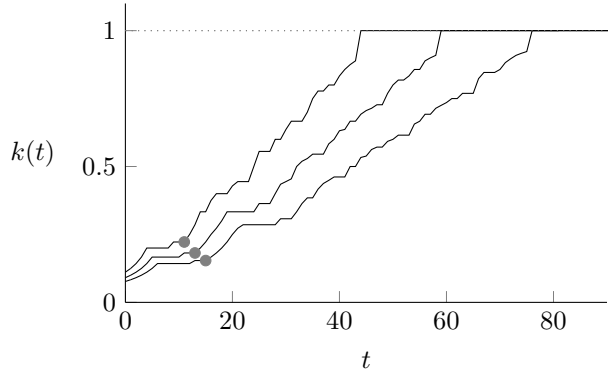
\begin{figure}[!htb]
\begin{center}
%
%
\scalebox{1}{
\begin{tikzpicture}
\begin{axis}[%
width=2.52083333333333in,
height=1.565625in,
scale only axis,
xmin=0,
xmax=90,
xlabel={$t$},
ymin=0,
ymax=1.1,
ylabel={$k(t)$},
ylabel style={rotate=-90},
axis x line*=bottom,
axis y line*=left
]
\addplot [color=gray,dotted,forget plot]
  table[row sep=crcr]{%
0	1\\
90	1\\
};
\addplot [color=gray, only marks,mark=*] coordinates { (11,0.222222222128693) };
\addplot [color=gray, only marks,mark=*] coordinates { (13,0.18181818181813) };
\addplot [color=gray, only marks,mark=*] coordinates { (15,0.153846153845677) };
\addplot [color=black,solid,forget plot]
  table[row sep=crcr]{%
1	0.124999999989541\\
2	0.142857142283702\\
3	0.166666666545808\\
4	0.200000003614178\\
5	0.199999999997971\\
6	0.199999999999392\\
7	0.200000000002461\\
8	0.199999999896193\\
9	0.222222222219386\\
10	0.222222222222143\\
11	0.222222222128693\\
12	0.250000000007788\\
13	0.285714285345648\\
14	0.333333333387884\\
15	0.333333333361907\\
16	0.375000000000057\\
17	0.400000000000205\\
18	0.400000000032378\\
19	0.399999999999352\\
20	0.428571428571331\\
21	0.444444444444457\\
22	0.444444444513522\\
23	0.444444444451904\\
24	0.500000000197929\\
25	0.555555555555316\\
26	0.555555555694241\\
27	0.555555555555713\\
28	0.599999999954093\\
29	0.60000000013622\\
30	0.636363636363626\\
31	0.666666666664639\\
32	0.666666666799017\\
33	0.666666666920491\\
34	0.700000000070929\\
35	0.750000000007049\\
36	0.77777777777743\\
37	0.777777777784678\\
38	0.800000000000182\\
39	0.800000000005156\\
40	0.833333333488156\\
41	0.857142857146187\\
42	0.87500000003827\\
43	0.888888885750958\\
44	1.00000000024282\\
45	0.999999999798092\\
46	0.999999999798092\\
47	0.999999999798092\\
48	0.999999999798092\\
49	0.999999999798092\\
50	0.999999999798092\\
51	0.999999999798092\\
52	0.999999999798092\\
53	0.999999999798092\\
54	0.999999999798092\\
55	0.999999999798092\\
56	0.999999999798092\\
57	0.999999999798092\\
58	0.999999999798092\\
59	0.999999999798092\\
60	0.999999999798092\\
61	0.999999999798092\\
62	0.999999999798092\\
63	0.999999999798092\\
64	0.999999999798092\\
65	0.999999999798092\\
66	0.999999999798092\\
67	0.999999999798092\\
68	0.999999999798092\\
69	0.999999999798092\\
70	0.999999999798092\\
71	0.999999999798092\\
72	0.999999999798092\\
73	0.999999999798092\\
74	0.999999999798092\\
75	0.999999999798092\\
76	0.999999999798092\\
77	0.999999999798092\\
78	0.999999999798092\\
79	0.999999999798092\\
80	0.999999999798092\\
81	0.999999999798092\\
};
\addplot [color=black,solid,forget plot]
  table[row sep=crcr]{%
1	0.100000000027379\\
2	0.111111111111128\\
3	0.125000000218932\\
4	0.142857143340507\\
5	0.166666666666487\\
6	0.1666666666666\\
7	0.166666666669101\\
8	0.166666666668391\\
9	0.16666666627745\\
10	0.1666666666666\\
11	0.181818181310518\\
12	0.181818181818073\\
13	0.18181818181813\\
14	0.200000000032105\\
15	0.222222222210576\\
16	0.24999999999639\\
17	0.27272727277284\\
18	0.299999999969202\\
19	0.333333333333371\\
20	0.333333333323708\\
21	0.333333333332718\\
22	0.333333333333371\\
23	0.333333333333456\\
24	0.33333333338453\\
25	0.363636363930539\\
26	0.363636363622987\\
27	0.363636363637397\\
28	0.399999999999693\\
29	0.43478260869631\\
30	0.444444444443604\\
31	0.454545454546519\\
32	0.500000000281773\\
33	0.519230769231399\\
34	0.529411764800443\\
35	0.545454546937549\\
36	0.545454545454049\\
37	0.545454545454902\\
38	0.583333333332575\\
39	0.600000000050045\\
40	0.63157894744495\\
41	0.636363636366241\\
42	0.666666666666174\\
43	0.666666666667936\\
44	0.692307692307622\\
45	0.705882352940989\\
46	0.714285714296921\\
47	0.727272727272691\\
48	0.727272727271782\\
49	0.769230769500268\\
50	0.800000000061345\\
51	0.818181818182438\\
52	0.818181818181216\\
53	0.833333333332462\\
54	0.857142857887595\\
55	0.857142857175688\\
56	0.888888888885717\\
57	0.899999999951774\\
58	0.90909090908859\\
59	1.00000000224479\\
60	1.00000000002814\\
61	1.00000000002814\\
62	1.00000000002814\\
63	1.00000000002814\\
64	1.00000000002814\\
65	1.00000000002814\\
66	1.00000000002814\\
67	1.00000000002814\\
68	1.00000000002814\\
69	1.00000000002814\\
70	1.00000000002814\\
71	1.00000000002814\\
72	1.00000000002814\\
73	1.00000000002814\\
74	1.00000000002814\\
75	1.00000000002814\\
76	1.00000000002814\\
77	1.00000000002814\\
78	1.00000000002814\\
79	1.00000000002814\\
80	1.00000000002814\\
81	1.00000000002814\\
82	1.00000000002814\\
83	1.00000000002814\\
84	1.00000000002814\\
85	1.00000000002814\\
86	1.00000000002814\\
87	1.00000000002814\\
88	1.00000000002814\\
89	1.00000000002814\\
90	1.00000000002814\\
91	1.00000000002814\\
92	1.00000000002814\\
93	1.00000000002814\\
94	1.00000000002814\\
95	1.00000000002814\\
96	1.00000000002814\\
97	1.00000000002814\\
98	1.00000000002814\\
99	1.00000000002814\\
100	1.00000000002814\\
101	1.00000000002814\\
102	1.00000000002814\\
103	1.00000000002814\\
104	1.00000000002814\\
105	1.00000000002814\\
106	1.00000000002814\\
107	1.00000000002814\\
108	1.00000000002814\\
109	1.00000000002814\\
110	1.00000000002814\\
111	1.00000000002814\\
112	1.00000000002814\\
113	1.00000000002814\\
114	1.00000000002814\\
115	1.00000000002814\\
116	1.00000000002814\\
117	1.00000000002814\\
118	1.00000000002814\\
119	1.00000000002814\\
120	1.00000000002814\\
121	1.00000000002814\\
};
\addplot [color=black,solid,forget plot]
  table[row sep=crcr]{%
1	0.0833333333113728\\
2	0.0909090906283865\\
3	0.100000000000023\\
4	0.111111111971468\\
5	0.124999997403052\\
6	0.142857142857025\\
7	0.142857142857224\\
8	0.142857142877347\\
9	0.142857142878029\\
10	0.14285714112043\\
11	0.142857142811749\\
12	0.14285714272981\\
13	0.153846153847809\\
14	0.153846154360309\\
15	0.153846153845677\\
16	0.166666666666856\\
17	0.181818181818244\\
18	0.200000000003229\\
19	0.222222222222825\\
20	0.249999999999204\\
21	0.272727272727366\\
22	0.285714285909592\\
23	0.285714285714221\\
24	0.285714285703705\\
25	0.285714285709957\\
26	0.285714285714562\\
27	0.285714285717347\\
28	0.285714286066423\\
29	0.307692307692406\\
30	0.307692307692435\\
31	0.307692307692776\\
32	0.333333333335759\\
33	0.363636363643764\\
34	0.384615385234838\\
35	0.384615384615472\\
36	0.416666666669698\\
37	0.437500000402565\\
38	0.448979591840839\\
39	0.461538461613713\\
40	0.461538461539362\\
41	0.461538461542318\\
42	0.499999999999204\\
43	0.500000000010175\\
44	0.533333333335236\\
45	0.538461538462172\\
46	0.562500000251362\\
47	0.571428571433103\\
48	0.571428571428555\\
49	0.592592592762344\\
50	0.600000000004911\\
51	0.615384615383846\\
52	0.615384615394305\\
53	0.615384615399648\\
54	0.656250000000341\\
55	0.666666666666629\\
56	0.692307694636952\\
57	0.692307692320355\\
58	0.714285714288053\\
59	0.724999999998204\\
60	0.735294117653666\\
61	0.749999999791498\\
62	0.750000000005002\\
63	0.769230769247883\\
64	0.769230769265278\\
65	0.769230769221622\\
66	0.823529412114226\\
67	0.846153846150401\\
68	0.846153846177685\\
69	0.846153846156739\\
70	0.857142857456381\\
71	0.88235294117851\\
72	0.896551724138931\\
73	0.909090909022694\\
74	0.916666667020323\\
75	0.923076923076067\\
76	1.00000000224371\\
77	1.00000000003652\\
78	1.00000000003652\\
79	1.00000000003652\\
80	1.00000000003652\\
81	1.00000000003652\\
82	1.00000000003652\\
83	1.00000000003652\\
84	1.00000000003652\\
85	1.00000000003652\\
86	1.00000000003652\\
87	1.00000000003652\\
88	1.00000000003652\\
89	1.00000000003652\\
90	1.00000000003652\\
91	1.00000000003652\\
92	1.00000000003652\\
93	1.00000000003652\\
94	1.00000000003652\\
95	1.00000000003652\\
96	1.00000000003652\\
97	1.00000000003652\\
98	1.00000000003652\\
99	1.00000000003652\\
100	1.00000000003652\\
101	1.00000000003652\\
102	1.00000000003652\\
103	1.00000000003652\\
104	1.00000000003652\\
105	1.00000000003652\\
106	1.00000000003652\\
107	1.00000000003652\\
108	1.00000000003652\\
109	1.00000000003652\\
110	1.00000000003652\\
111	1.00000000003652\\
112	1.00000000003652\\
113	1.00000000003652\\
114	1.00000000003652\\
115	1.00000000003652\\
116	1.00000000003652\\
117	1.00000000003652\\
118	1.00000000003652\\
119	1.00000000003652\\
120	1.00000000003652\\
121	1.00000000003652\\
122	1.00000000003652\\
123	1.00000000003652\\
124	1.00000000003652\\
125	1.00000000003652\\
126	1.00000000003652\\
127	1.00000000003652\\
128	1.00000000003652\\
129	1.00000000003652\\
130	1.00000000003652\\
131	1.00000000003652\\
132	1.00000000003652\\
133	1.00000000003652\\
134	1.00000000003652\\
135	1.00000000003652\\
136	1.00000000003652\\
137	1.00000000003652\\
138	1.00000000003652\\
139	1.00000000003652\\
140	1.00000000003652\\
141	1.00000000003652\\
142	1.00000000003652\\
143	1.00000000003652\\
144	1.00000000003652\\
145	1.00000000003652\\
146	1.00000000003652\\
147	1.00000000003652\\
148	1.00000000003652\\
149	1.00000000003652\\
150	1.00000000003652\\
151	1.00000000003652\\
152	1.00000000003652\\
153	1.00000000003652\\
154	1.00000000003652\\
155	1.00000000003652\\
156	1.00000000003652\\
157	1.00000000003652\\
158	1.00000000003652\\
159	1.00000000003652\\
160	1.00000000003652\\
161	1.00000000003652\\
162	1.00000000003652\\
163	1.00000000003652\\
164	1.00000000003652\\
165	1.00000000003652\\
166	1.00000000003652\\
167	1.00000000003652\\
168	1.00000000003652\\
169	1.00000000003652\\
};
\addplot [color=gray,dotted,forget plot]
  table[row sep=crcr]{%
0	1\\
90	1\\
};
\addplot [color=black,solid,forget plot]
  table[row sep=crcr]{%
0	0.111111111111111\\
1	0.125\\
};
\addplot [color=black,solid,forget plot]
  table[row sep=crcr]{%
0	0.0909090909090909\\
1	0.1\\
};
\addplot [color=black,solid,forget plot]
  table[row sep=crcr]{%
0	0.0769230769230769\\
1	0.0833333333333333\\
};
\end{axis}
\end{tikzpicture}
}%
\end{center}
\caption{\rev{The synchronizing probability function of $\mathcal {TR}_9$, $\mathcal {TR}_{11}$ and $\mathcal {TR}_{13}$ (from the left to the right), and the points at which Conjecture \ref{ConjT3} is not satisfied.}}
\label{FamilySync}
\end{figure}

\section{Conclusion}
In this paper, we first formalised the concept of triple rendezvous time as an intermediate step towards \v Cern{\'y}'s conjecture. Using the synchronizing probability function, a tool which allows to represent the synchronization process within an automaton, we were able to prove a non-trivial upper bound on the triple rendezvous time.

Then, by providing an infinite family of automata for which $T_3=n+3$ (with $n$ being the number of states of the automaton), we refuted Conjecture \ref{ConjK}, formulated in \cite{jungers_sync_12}. 
Conjecture \ref{ConjK} was stated as a tentative roadmap towards a proof of \v Cern{\'y}'s conjecture with the help of the synchronizing probability function, and in that sense our conterexample is a negative result towards that direction.

A natural continuation of this research would be to find non-trivial bounds for $T_l$, with $3<l\leq n$ (i.e., the smallest number such that the set of reachable vectors includes a column of weight at least $l$). Another research question is how to narrow the gap between $n+3$ and $n^2/(6.4...)$ for the triple rendezvous time.

\section*{Acknowledgements}
We would like to thank Matthew Philippe, Myriam Gonze, Bal\' asz Gerencs\' er, \' Elodie Boucquey and Jean Boucquey for their helpful comments and interesting discussions, Vladimir Gusev for proofreading, and the anonymous reviewers for their suggestions.

\bibliography{references}

\end{document}